\newtheorem{theorem}{Theorem}
\newtheorem{lemma}[theorem]{Lemma}
\newcommand{\cmark}{\ding{51}}%
\newcommand{\xmark}{\ding{55}}%
\newacronym{ber}{BER}{Bit Error Rate}
\newacronym{crp}{CRP}{Challenge-Response Pair}
\newacronym{dram}{DRAM}{Dynamic Random Access Memory}
\newacronym{ef}{EF}{Entropy Features}
\newacronym{hd}{HD}{Hamming Distance}
\newacronym{hs}{HS}{Helper Stream}
\newacronym{iot}{IoT}{Internt of Thing}
\newacronym{nvm}{NVM}{Non-Volatile Memory}
\newacronym{puf}{PUF}{Physical Unclonable Function}
\begin{document}

\title{EPUF: A Novel Scheme Based on Entropy Features of Latency-based DRAM PUFs Providing Lightweight Authentication in IoT Networks}

\author{Fatemeh~Najafi, Masoud~Kaveh, Mohammad~Reza~Mosavi, Alessandro~Brighente, and~Mauro~Conti,~\IEEEmembership{Fellow,~IEEE,}% <-this % stops a space
\IEEEcompsocitemizethanks{\IEEEcompsocthanksitem F. Najafi and M. R. Mosavi are with the Department of Electrical Engineering, Iran University of Science and Technology, Tehran, Iran.%
\IEEEcompsocthanksitem M. Kaveh is with the Department of Information and Communication Engineering, Aalto University, Espoo, Finland.%
\IEEEcompsocthanksitem A. Brighente and M. Conti are with the Department of Mathematics and HIT Reasearch Center, University of Padova, Padova, Italy.}% <-this % stops a space
}

% The paper headers
%\markboth{Journal of \LaTeX\ Class Files,~Vol.~14, No.~8, August~2015}%
%{Shell \MakeLowercase{\textit{et al.}}: Bare Demo of IEEEtran.cls for IEEE Journals}

\IEEEtitleabstractindextext{%
\begin{abstract}
\acp{puf} are hardware-oriented primitives that exploit manufacturing variations to generate a unique identity for a physical system. Recent advancements showed how \ac{dram} can be exploited to implement \ac{puf}s. \ac{dram} \ac{puf}s require no additional circuits for \ac{puf} operations and can be used in most of the applications with resource-constrained nodes such as \ac{iot} networks. However, the existing \ac{dram} \ac{puf} solutions either require to interrupt other functions in the host system, or provide unreliable responses due to their sensitiveness to the environmental conditions. 

In this paper, we propose EPUF, a novel strategy to extract random and unique features from \ac{dram} cells to generate reliable \ac{puf} responses. In particular, we use the bitmap images of the binary \ac{dram} values and their entropy features. We show via real device experiments that EPUF is approximately $1.7$ times faster than other state of the art solutions, achieves $100\%$ reliability, generates features with $47.79\%$ uniqueness, and supports a large set of \acp{crp} that leads to new potentials for \ac{dram} \ac{puf}-based authentication. We also propose a lightweight authentication protocol based on EPUF, which not only provides far better security guarantees but also outperforms the state-of-the-art in terms of communication overhead and computational cost.
\end{abstract}

\begin{IEEEkeywords}
Latency-based DRAM PUF, entropy features, IoT security, authentication protocol.
\end{IEEEkeywords}
}
\maketitle
\IEEEdisplaynontitleabstractindextext

\IEEEpeerreviewmaketitle

\IEEEraisesectionheading{\section{Introduction}}

\IEEEPARstart{P}{hysical} Unclonable Functions (PUFs) are inherently secure blocks that serve as information system for key generation, identification, and authentication purposes. Traditional cryptography methods need to store the secret keys in \ac{nvm} devices, which not only makes the systems vulnerable to physical attacks, but also incurs in significant costs. \ac{puf}-based mechanisms remove the need for storing secrets on \acp{nvm}, preventing the aforementioned attacks and reducing hardware costs. \ac{puf}s exploit physical features that are randomly generated during to the manufacturing process variations to generate unique values~\cite{1}. Dynamic  Random  Access  Memory (\ac{dram})-based \ac{puf}s have recently become popular security primitives, as they provide cost-efficient and functional security services. Furthermore, \ac{dram}s are major components of most modern electronic systems~\cite{2}. 

Existing \ac{dram} \ac{puf}s present the following limitations. \ac{dram} \ac{puf} mechanisms using power-up values require power cycles and significant latency to characterize unbiased \ac{dram} cells. Furthermore, they need a selection algorithm to extract stable cells~\cite{3}. Retention-based \ac{dram} \ac{puf}s need a long period of time (order of minutes) to generate sufficient failures. They also exploit complex error correction techniques that cause technical constraints and overheads~\cite{4}, \cite{5}. In \ac{dram} organization, there are multiple defined time limits for the proper scheduling of different operations. Modifying these parameters can cause the failure of the normal \ac{dram} activity and leakage of \ac{dram} data. The pattern of these failures is completely random and unpredictable, providing an inherent source for entropy. \ac{dram} latency \ac{puf}s are fast run-time accessible techniques that can benefit from the property to employ a \ac{dram} as a security primitive. These solutions target the reduction of the timing parameters and do not require power or refresh cycles. In~\cite{6}, the authors configured the \ac{dram} \ac{puf} by changing the time required for a complete activation process. Although this method needs a much shorter evaluation time, it requires filtering mechanisms to extract the cells with a high failure probability and enhance the \ac{puf} reliability. This approach is based on repeating the read operation multiple times and selecting the more stable bits to form the final responses. However, the filtering mechanism increases the evaluation time and causes hardware overheads. Another approach based on timing parameters has been recently proposed in~\cite{7}. This solution employs a collection of algorithms to extract robust responses in lower query time at the expense of increased memory overheads. PreLat \ac{puf} is also a latency-based \ac{puf}~\cite{8}, which uses the erroneous data caused by reduced pre-charge time to generate device signatures. In this structure, due to the deterministic behavior of \ac{dram} cells, the raw response may not have sufficient randomness and uniqueness. To address this issue, PreLat \ac{puf} selects the independent and suitable \ac{dram} cells to produce robust responses during a pre-selection algorithm. The second algorithm is employed to extract the address of qualified cells to form a challenge and access their contents during response generation. This technique leads to time and power overheads and limits the Challenge Response Pair (\ac{crp}) space of \ac{dram} \ac{puf} to independent cells.
In~\cite{9}, Orasa et al., proposed Dataplant as a new \ac{dram}-based security primitive, which makes changes in \ac{dram} timing signals to organize a more reliable and cost effective \ac{puf} mechanism. This strategy includes US-Dataplant that exploits Sense Amplifiers (SAs) to extract the random values without \ac{dram} data destruction, and UC-Dataplant, which generates unpredictable values by setting the voltage of \ac{dram} cell to precharge value and reading the content of it after activation process. Dataplant mechanism can improve the \ac{puf} performance in several cases, especially in reliability of responses. However, the implementation process of Dataplant mechanisms presents some limitations. In fact, it requires direct access to hardwired control logics in \ac{dram} to make some additional modifications in control signals. Therefore, it is not practical to use Dataplant in commodity \ac{dram}s.

In summary, the prior \ac{dram} \ac{puf}s need power cycles or considerable latency for \ac{puf} operation in most techniques may interrupt or interfere in other operations of the system, limiting the practicality of \ac{dram} \ac{puf}s in real-time applications. Furthermore, most \ac{dram} \ac{puf}s are highly sensitive to internal/external noises and ambient conditions such as temperature variation, which cause noisy and unreliable responses. In the existing \ac{dram} \ac{puf}s, this important challenge is addressed by exploiting pre-selection algorithms, filtering mechanisms, and post-processing techniques like helper data algorithms. However, all these methods contribute to significant additional hardware/computational overheads. 

In this paper, we propose a low-cost strategy as a new security primitive to extract unique and unpredictable features using entropy vectors of \ac{dram} bitmaps. We propose EPUF on the basis of existing fast-runtime latency-based \ac{dram} \ac{puf}s. This approach completely removes the need for extra power cycles, long waiting time, pre-selection algorithms, filtering mechanisms, and hardware modifications. Thanks to our method, we significantly improve the robustness of generated responses and minimize the implementation costs. Therefore, it can be readily used in commodity \ac{dram}s and Internet of Things (\ac{iot}) applications to provide security services. 
The contributions of our paper can be summarized as follows.
\begin{enumerate}
    \item We propose to use entropy features as a new approach to extract true random and highly reliable \ac{puf} responses from the raw \ac{dram} bits.
    \item We propose EPUF, a novel \ac{dram} \ac{puf} that can considerably address the drawbacks of the existing \ac{dram} \ac{puf}s. EPUF eliminates the need for post processing on the \ac{puf} responses and supports a large set of \acp{crp} (strong \ac{dram} \ac{puf}), which can be used as an intrinsic security primitive providing key generation and authentication purposes. 
    \item We consider an efficient methodology to implement the proposed EPUF with the capability of execution on low-cost microcontrollers. 
    \item We evaluate the characteristics of EPUF on real hardware, and present a performance comparison between EPUF and some of the state-of-the-art \ac{dram} \ac{puf}s. We show that EPUF is at least $1.7$ times faster than other solutions, achieves $100\%$ reliability, and $47.79\%$ response uniqueness.
    \item We propose a provably-secure authentication protocol based on EPUF for \ac{iot} networks, that outperforms the previous \ac{puf}-based schemes in terms of provided security features and lightweight design. 
\end{enumerate}

The remainder of this paper is organized as follows. Section~II presents the proposed EPUF. Section~III presents the implementation results and analysis. Section~IV demonstrates EPUF-based key generation. Section~V proposes a lightweight and provably-secure authentication protocol based on EPUF with in-depth security and performance analysis, and finally, Section~VI concludes the paper. 

\section{EPUF: A Novel DRAM PUF}
In this section we provide an overview of EPUF, our proposal for a novel \ac{dram} \ac{puf}. We first discuss in Section~\ref{sec:entropy fearures} how to compute entropy features to generate robust responses. Then, we discuss in Section~\ref{sec:implem} how EPUF exploits entropy features to generate unique responses.

\subsection{Entropy Features}\label{sec:entropy fearures}
\ac{puf}s are information systems that allow access to random information on the basis of their hardware characteristics. \ac{dram} \ac{puf} mechanisms provide random, unpredictable but still reproducible data using bit failures caused by intercepting normal \ac{dram} operations. During these mechanisms, however, only a small percentage of \ac{dram} cells fail and a large number of memory blocks is needed to detect a sufficient number of failures to generate responses. In addition, the procedure for selecting or filtering robust cells, as well as determining their location in the \ac{dram} array causes significant space, hardware, and computational costs. In this paper, we propose entropy features as a new primitive based on entropy values of raw \ac{dram} content and enable the generation of unpredictable bit-streams as reliable responses. Our proposed process to generate entropy features consists of three steps.

\textbf{Bitmap Image Generation.} In this phase, some known data is written to the \ac{dram} block with a specified starting address and block size. Then, the \ac{dram} data is read back when the timing parameters have been changed considering the used \ac{dram} \ac{puf} mechanism. The number of bitmap lines is given to the bitmap generator module as an initialization parameter. Then, bytes of output \ac{dram} binary data are scanned and converted to integer values to form bitmap pixels. 

\textbf{Entropy Module.} Entropy is a value that indicates how an irregular value appears. Bit failures in \ac{dram} \ac{puf} outputs occur in random, unpredictable, and \ac{dram} block-specific patterns that cause different and unpredictable entropy values for multiple bitmap lines. Therefore, entropy arrays for each \ac{dram} data-based bitmap are also unpredictable and unique, offering a new approach to \ac{puf} characterization. When the bitmap image is generated, the entropy array of the bitmap is calculated using the color values for each bitmap line. For example, in the case of a bitmap with $128$ pixels in each row, the entropy value of $128$ pixels is represented as one element in the entropy array. The length of the array is equal to the number of lines on the bitmap.  
The entropy value for each line is calculated using \eqref{eq:entropy}, where $E_j$ is the entropy of line $j$ and $p_i$ is the probability of  appearance of value $i$. The range of $i$ is set from $0$ to $255$, which shows the possible integer values for a byte. This phase including entropy calculations is executed by the entropy module, where entropy is computed as  
\begin{equation}\label{eq:entropy}
    E_j = -\sum_{i=0}^{i=255} p_i \log_2 p_i.
\end{equation}

\textbf{Robust Response Generation.} The elements of entropy array are floating point numbers due to the use of probability values and the logarithm function. In this phase, the floating point numbers are converted to fixed point to generate a response in the binary format. The number of fractional bits and the precision of the conversion can be determined considering the similarity of multiple measurements and the robustness of the \ac{dram} data read in several iterations. In this method, we analyze the similarities of entropy values belonging to different iterations. Then, we determine the best precision in which the final responses include only the shared points of entropy arrays. This parameter is fixed to the best value during the \ac{puf} characterization phase. We present our method to generate the \ac{ef} of a bitmap of a given memory segment size in \ref{alg:entFeat}. This algorithm entails the general procedure of extracting entropy characteristics of raw \ac{dram} responses that will be further processed to extract final robust responses. The algorithm receives as input the number $R$ of bitmap rows, the number $C$ of bitmap columns, and the number $B$ of bytes in each row.

\SetKwInput{Input}{Input}
\SetKwInput{Output}{Output}
\SetKw{Break}{break}
\begin{algorithm}[!h]% [!h]
    
    \labelformat{algocf}{Algorithm\,#1}

    \Input{Address and Size of the PUF segment,the Input pattern, $R$, $C$, $B$.
}
    \Output{\ac{ef}}
    \tcc{initialize EF as empty array}
    EF $=[]$\;
    \tcc{1D array of size 256 containing the times each integer occurs in a row }
    $int_{\rm count} =[]$\; 
    \tcc{probabilities of appearances}
    $p=0$\;
    \For{$i=1$ \KwTo $R$}{
        \For{$j=1$ \KwTo $C$}{
        data = read\_memory(i,j)\;
        int\_data = binary\_to\_integer(data)\;
        int\_count(int\_data) ++\;
        }
        $p$= int\_count $/B$\;
        temp = $\sum_{k=0}^{255} p(k) \times \log p(k)$\;
        $EF(i)$ = integer\_to\_binary(temp)\;
    }
    \Return $EF$\;
    
    \caption{Generating entropy features.}
    \label{alg:entFeat}
\end{algorithm}

\subsection{Implementing EPUF Using Entropy Features}\label{sec:implem}
We propose EPUF, a solution based on entropy features that allows for the effective use of \ac{dram} segments and improves the \ac{crp} space. EPUF eliminates the effects of internal/external noises, resulting in higher reliability. Furthermore, EPUF can be implemented in commodity systems without any extra hardware, allowing for low-cost and high-performance authentication mechanisms. We discuss the process used for \ac{puf} characterization as well as the proposed mechanisms for key generation and authentication in following subsections. 

\textbf{Initial parameters.} To configure the EPUF, it is important to determine initial values for EPUF parameters, including the best EF precision, the number of bitmap lines, and the size of the \ac{dram} array. To find the best precision value to extract robust and reliable EFs and, finally, to provide reliable responses, we propose \ref{alg:initPar}. The algorithm receives as input the address and size of the \ac{puf} segment, the input pattern and $R$, and $C$, which represent respectively the number of rows and columns of the memory.
\SetKwInput{Input}{Input}
\SetKwInput{Output}{Output}
\SetKw{Break}{break}
\begin{algorithm}[!h]% [!h]
    
    \labelformat{algocf}{Algorithm\,#1}

    \Input{Address and Size of the PUF segment,the Input pattern, number of reads $\Omega$.
}
    \Output{dmax = precision of fixed point binary EFs}
    dmax $=0$\;
    \For{$k=1$ \KwTo $\Omega$}{
    Calculate $EF_k$ using Algorithm 1\;
    }
    \For{$i=1$ \KwTo $\Omega$}{
        \For{$j=2$ \KwTo $\Omega$}{
            \If{$j>i$}{
            temp$ = \max \left (|EF_j|-|EF_i|\right)$\;
            }
            \If{temp $>$ dmax}{
            dmax $=$ temp\;
            }
        } 
    }    
    \Return dmax\;
    
    \caption{Setting initial parameters.}
    \label{alg:initPar}
\end{algorithm}
In \ref{alg:initPar}, raw \ac{dram} values are read multiple times, and the proper precision for EF calculations is determined considering the maximum differences between multiple iterations. The number of bitmap lines is determined by the length of the required \ac{puf} response and the precision of EFs. The \ac{dram} \ac{puf} segment is set to an efficient size considering the number of bitmap lines and the available \ac{dram} space, which is sufficient to extract the \ac{puf} responses without causing memory overheads. Finally, the bitmap columns are set by dividing the segment size by the number of lines. 

\textbf{Filtering Unreliable Features.} EPUF and EF-based responses provide a high degree of robustness and reliability by inducing the effects of internal/external noises. However, the entropy values of some memory cells in the \ac{puf} segments may include some differences under very different operating conditions. These dissimilarities happen due to the high sensitivity of the \ac{dram} cells to environmental conditions causing unreliability in the final responses. To address this problem, we propose \ref{alg:helpStream}, where each segment is evaluated under varying conditions, and the sensitive points that represent the entropy feature of most unstable \ac{dram} cells are extracted. Then, the \ac{puf} segment is filtered out of these features and a \ac{hs} is produced containing the stable features. \ac{hs} is sent to EPUF to filter unsuitable features from the final responses using a XNOR module. Additionally, it only contains the location of qualified entropy features and contains no information about \ac{dram}-based EPUF. Therefore, \ac{hs} is transferred without cryptography requirements, along with the input challenge to help the final responses achieve $100\%$ reliability.   

\section{EPUF Implementation and Analysis}
In this section, we present the EPUF implementation details, including the hardware architecture. We discuss the key metrics to evaluate our proposed method based on the experimental \ac{dram} data for \ac{dram} chips from different manufacturers. Finally, we demonstrate the EPUF performance analysis and compare it with other latency-based approaches. 

\subsection{Experimental Setup and Hardware Design}
 In this work, we exploit a platform consisting of Spartan6 FPGA and DDR3 chip for EPUF implementations.
% as shown in Fig.~\ref{fig:setup}. 
% \begin{figure}[!h]
%     \centering
%     \includegraphics[width=.9\columnwidth]{}
%     \caption{Hardware setup used to implement and test EPUF.}
%     \label{fig:setup}
% \end{figure}
In order to evaluate the reliability of EPUF and analyze its sensitivity to ambient conditions, we run the experimental tests under different temperatures ($25^{\circ}-55^{\circ}$ C). The hardware design used to implement the EPUF is summarized in Fig.~\ref{fig:arch}. 
\begin{figure}[!h]
    \centering
    \includegraphics[width=.9\columnwidth]{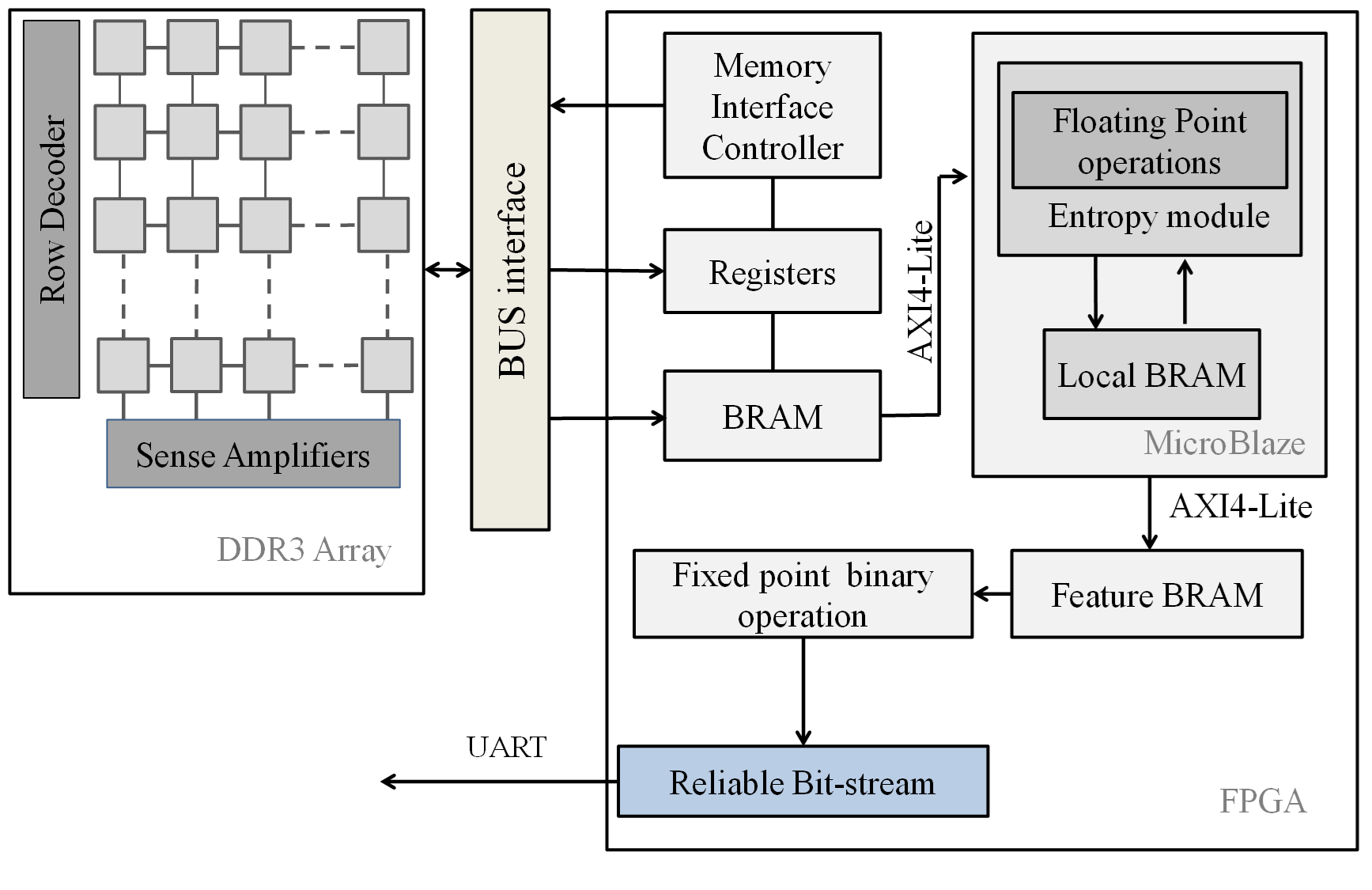}
    \caption{Architecture used for EPUF implementation.}
    \label{fig:arch}
\end{figure}
We considered a lightweight implementation and exploited a parallel design to minimize the evaluation time. To keep the logical and time overheads down, we use MicroBlaze, the Xilinx’s $32$-bit RISC core to implement the entropy module, including floating-point operations. In this architecture, the \ac{puf} segment is reserved using the specified address and the input pattern is written in the \ac{puf} segment. Then the timings are changed. Considering the size of the bitmap line, each line's data is read and is converted to integer values. Next, the probability array of integers is sent to MicroBlaze for entropy calculations via the AXI bus. After running Algorithm~1 and generating the entropy function, the results are sent to the fixed-point module to generate the final binary answer using the given precision. These steps are performed in parallel until all the lines are read. The proposed method uses low-cost and low-power FPGAs and microcontrollers. Thus, it is an efficient candidate for security of resource-constrained devices such as \ac{iot} nodes. In addition, the latency-based data set for \ac{dram} chips from different manufacturers obtained in~\cite{7} is used to verify the performance of the EPUF. We used MATLAB to simulate and evaluate our proposed system for this dataset. BRAM and UART represent Block RAM and Universal Asynchronous Receiver-transmitter, respectively.

\subsection{EPUF Evaluation}
In this section, we measure the robustness, diffuseness, and uniqueness of EPUF. We use $32$kB \ac{puf} segments to produce bitmaps of size $256 \times 125$ to generate EPUF-based responses. We exploit our proposed Algorithm~2 to extract the best precision value by reading raw data of \ac{dram} and produced EFs for $20$ times. Additionally, we execute \ref{alg:helpStream} to generate the HSs after testing EPUF responses under different temperatures along $50$ measurements. We also report the influence of different input-patterns.  
\SetKwInput{Input}{Input}
\SetKwInput{Output}{Output}
\SetKw{Break}{break}
\begin{algorithm}[!h]% [!h]
    
    \labelformat{algocf}{Algorithm\,#1}

    \Input{Address and Size of the PUF segment, the Input pattern $(r)$, $dmax$ from Alg. 2, number of reads $(\Omega)$, threshold $(\Theta)$.
}
    \Output{$HS$}
    \tcc{Initialize empty array}
    $HS =  []$\;
    count $= 0$\;
    Compute $EF$ via Alg. 1 at operating temperature and apply $p$\; 
    \For{$k=1$ \KwTo $\Omega$}{
        temp = Compute $EF$ via Alg. 1 and apply $p$\;
        \For{$i=1$ \KwTo length($EF$)}{
            $X =$ xor $($temp$,EF)$\;
            \If{$X == 1$}{
                count(i) ++\;
            }
        }
    }
    \For{$i=1$ \KwTo length$(EF)$}{
        \eIf{count(i) $>\Theta$ }{
        $HS(i) = 0$\;
        }{
        $HS(i)=1$\;
        }
    }
    \Return $HS$\;
    
    \caption{Generating \ac{hs}.}
    \label{alg:helpStream}
\end{algorithm}

\textbf{Robustness.} This property shows the reproducibility of a \ac{puf} response over different measurements and operating conditions~\cite{10},\cite{11}. To examine the robustness, we input the same set of challenges to EPUF multiple times at varying temperatures ($25^{\circ}-55^{\circ}$). Then, we measured the intra-\ac{hd}s of generated responses. Fig.~\ref{fig:entr} shows the entropy values of the \ac{dram} \ac{puf} responses measured at different temperatures and their similarity to the response generated at the reference temperature ($25^{\circ}$ C). We checked the robustness considering two scenarios: 1) EFs without applying HSs, and 2) final responses after using HSs. 
\begin{figure}[!h]
    \centering
    \includegraphics[width=.9\columnwidth]{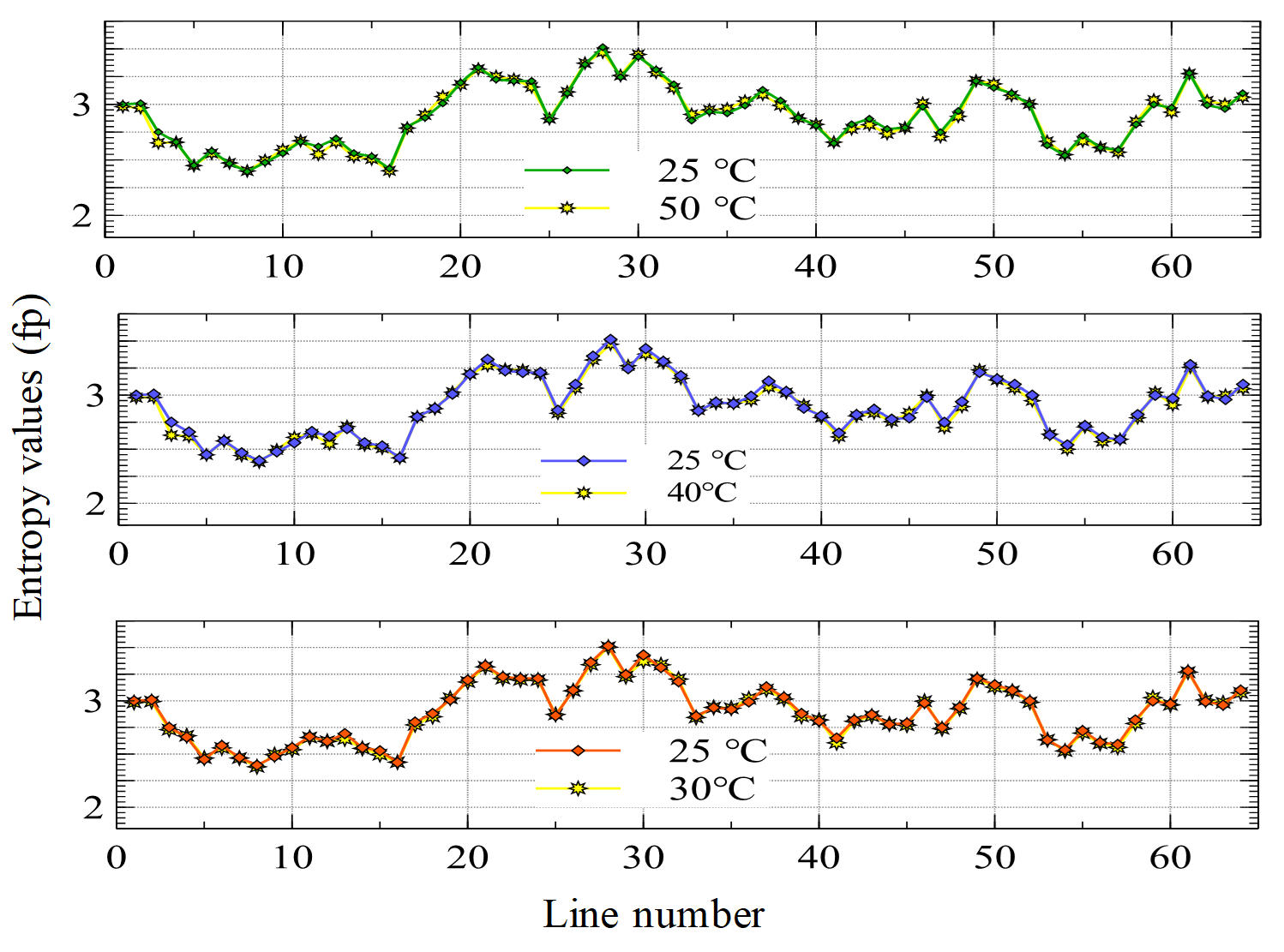}
    \caption{Entropy values of a \ac{dram} block at room temperature in comparison with other temperatures.}
    \label{fig:entr}
\end{figure}

Fig.~\ref{fig:ber_a} shows the \ac{ber} of raw \ac{dram} data. Fig.~\ref{fig:ber_b} and Fig.~\ref{fig:ber_c} show the evaluation results of EFs and final responses, respectively. From these results, we can conclude that using EFs as the responses can significantly decrease the \ac{ber} (lower than $0.013$), which greatly improves the reliability in comparison to the raw \ac{dram} data. Additionally, we eliminate the effects of unreliable features by generating HSs and applying them to the EFs, leading to reliable final responses. As shown in Fig.~\ref{fig:ber_c}, using different threshold values $\Theta$ defined in HS generation process (line 14 in \ref{alg:helpStream}) directly affects the \ac{ber}. In the case of using $\Theta=0$ as the threshold, only the entropy values will provide the final response, which include no flips over several measurements at different temperatures. This approach results in a \ac{ber} of $0$ and discards the need for any additional post-processing phases such as expensive ECC techniques. Thus, the final \ac{ber} and the number of reliable features depend on the threshold value. 
 \begin{figure}[!h]
    \subfigure[center][DRAM data]
    {\includegraphics[width = .9\columnwidth]{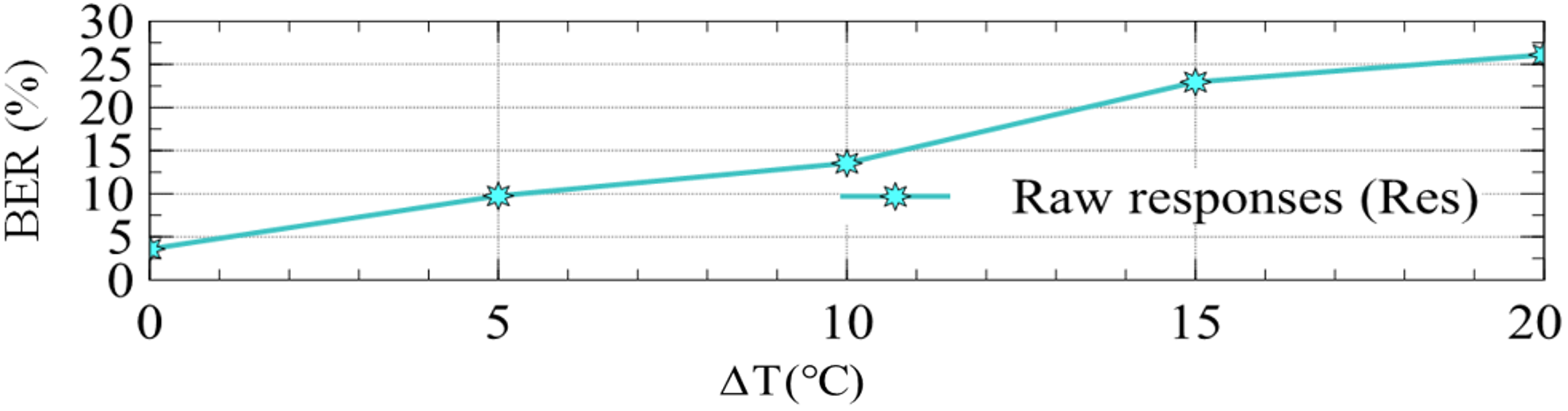}
    \label{fig:ber_a}}
    \subfigure[center][EFs]
    { \includegraphics[width = .9\columnwidth]{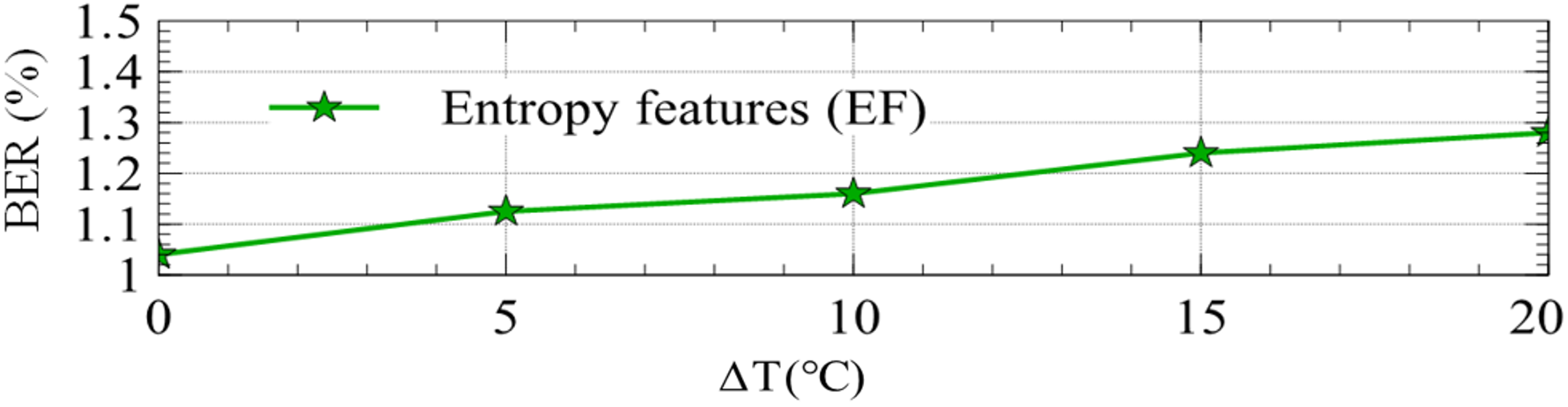}
    \label{fig:ber_b}}
    \subfigure[center][XNOR HSs with EFs]
    {\includegraphics[width = .9\columnwidth]{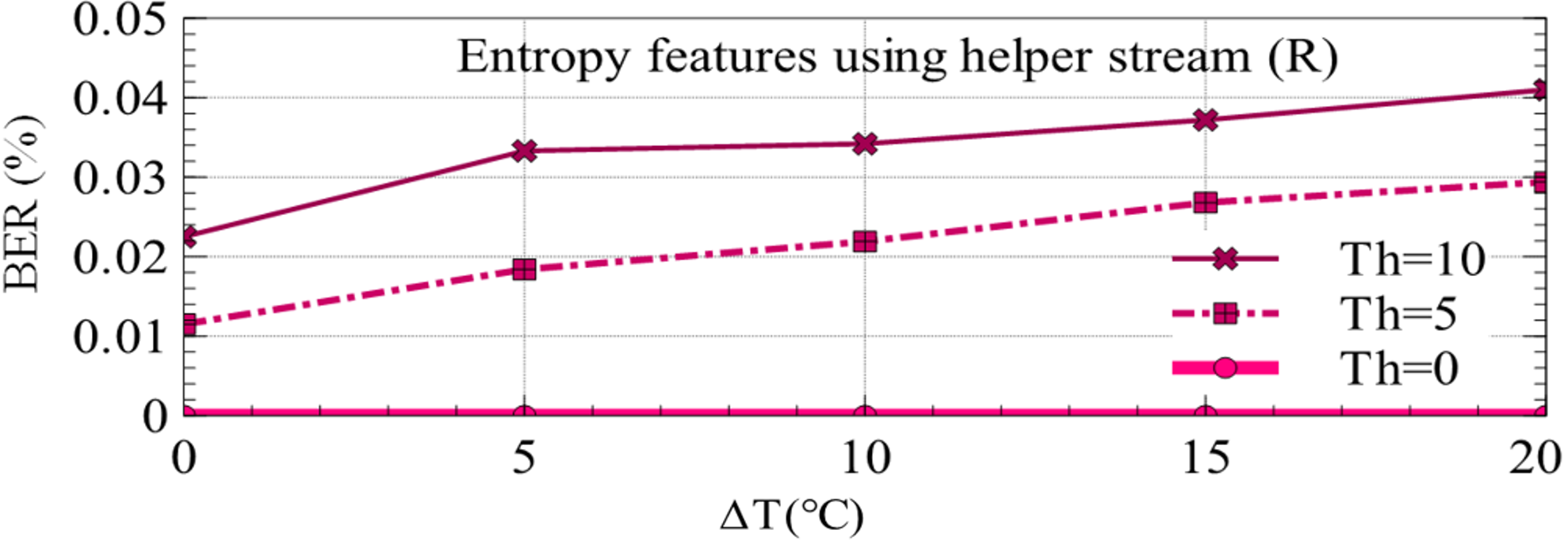}
    \label{fig:ber_c}}
 	\caption{\ac{ber} of responses: comparison of different implementations.}
 	\label{fig: ber}
\end{figure}

Fig.~\ref{fig:distr} shows the distribution of reliable bits using various threshold values. The results captured from different \ac{dram} segments show that using a lower threshold to provide lower \ac{ber} may lead to fewer qualified bits. However, the number of unqualified bits is not considerable and this issue can be dealt with using slightly larger memory segments.
\begin{figure}
    \centering
    \includegraphics[width=.9\columnwidth]{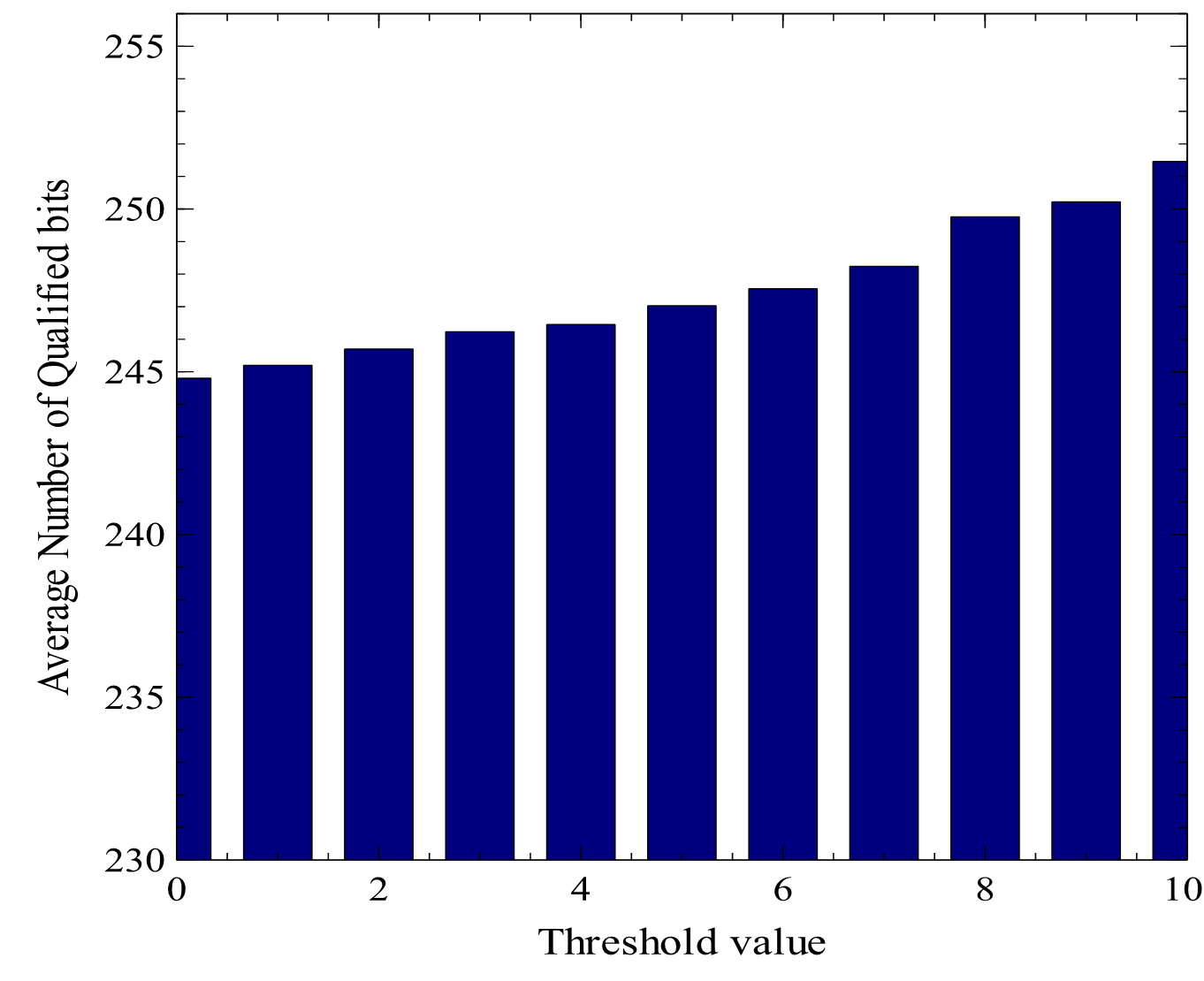}
    \caption{Distribution of reliable bits using different threshold values.}
    \label{fig:distr}
\end{figure}

\textbf{Diffuseness.} This property indicates that a \ac{puf} device can generate different and independent responses to various challenges. In this work, the input challenges are the address of \ac{puf} segments \cite{12,13,14}. We calculated the \ac{hd} between the EPUF responses corresponding to different challenges (i.e., inter-segment \ac{hd} in a same bank) to check the diffuseness. Table~\ref{tab:avg_HDs} shows the average \ac{hd} of the various responses collected from different \ac{dram} segments when the $0$ threshold is used during responses generation (see \ref{alg:helpStream}). The results indicate that the EPUF-based responses corresponding to various challenges are quite unique and independent. 
\begin{table}[!h]
\renewcommand{\arraystretch}{1.3}
\caption{Average \ac{hd}s of EPUF-based responses corresponding to different challenges}
\label{tab:avg_HDs}
\centering
\begin{tabular}{p{1.2cm}|p{2cm}||p{1.2cm}|p{2cm}}
\hline
\bfseries DRAM bank ID & \bfseries Average Inter-Segment \ac{hd} & \bfseries DRAM bank ID & \bfseries Average Inter-Segment \ac{hd}\\
\hline\hline
000 & 49.21 & 100 & 47.56 \\
\hline
001 & 39.56 & 101 & 48.84 \\
\hline
010 & 40.74 & 110 & 46.76 \\
\hline
011 & 49.25 & 111 & 46.54\\
\hline
\end{tabular}
\end{table}

\textbf{Uniqueness.} A PUF device should be identified uniquely among other devices. We quantified the uniqueness of EPUF by calculating the inter \ac{hd} of responses from different \ac{dram} chips and blocks. The distribution of inter-\ac{hd} for EFs from different \ac{dram} chips is presented in Fig.~\ref{fig:banks} and the average inter-chip \ac{hd} is $47.79\%$. Fig.~\ref{fig:chps} shows the inter-\ac{hd} results for EPUF responses provided by different banks of the same \ac{dram} chip when the threshold $\Theta$ in \ref{alg:helpStream} is set to $0$. The average inter-bank \ac{hd} of EPUF responses is $47.09\%$.  Therefore, EPUF can provide desirable uniqueness as well as improved reliability.
 \begin{figure}[!h]
    \subfigure[center][Different banks]
    { \includegraphics[width = 0.45\columnwidth]{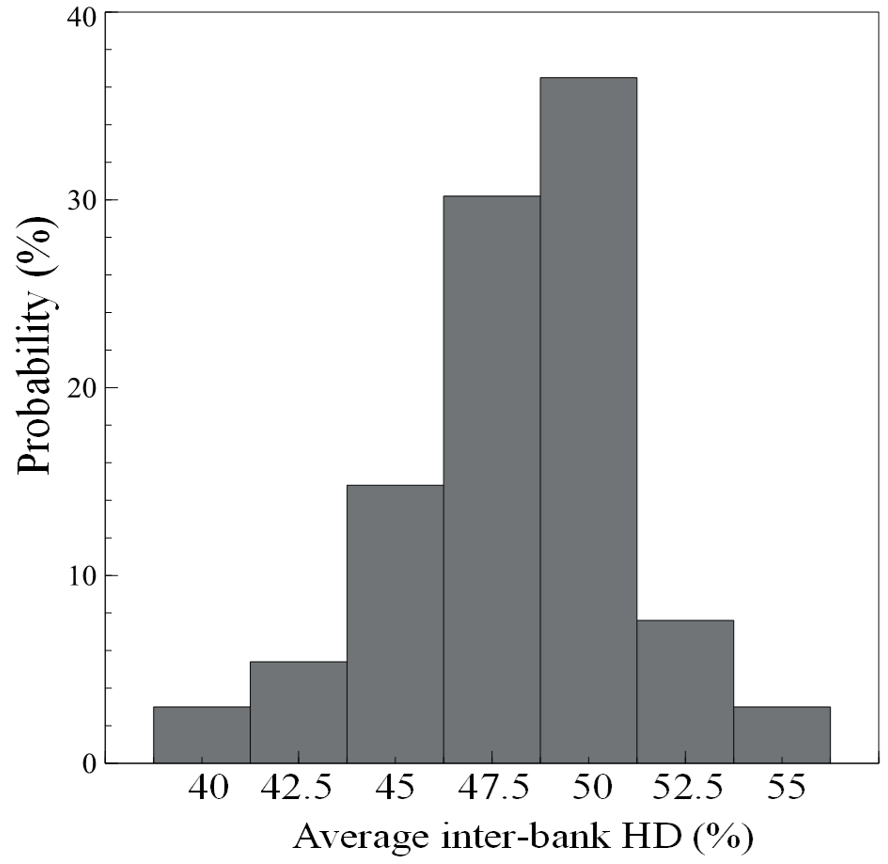}
    \label{fig:banks}}
    \subfigure[center][Different chips]
    {\includegraphics[width = 0.45\columnwidth]{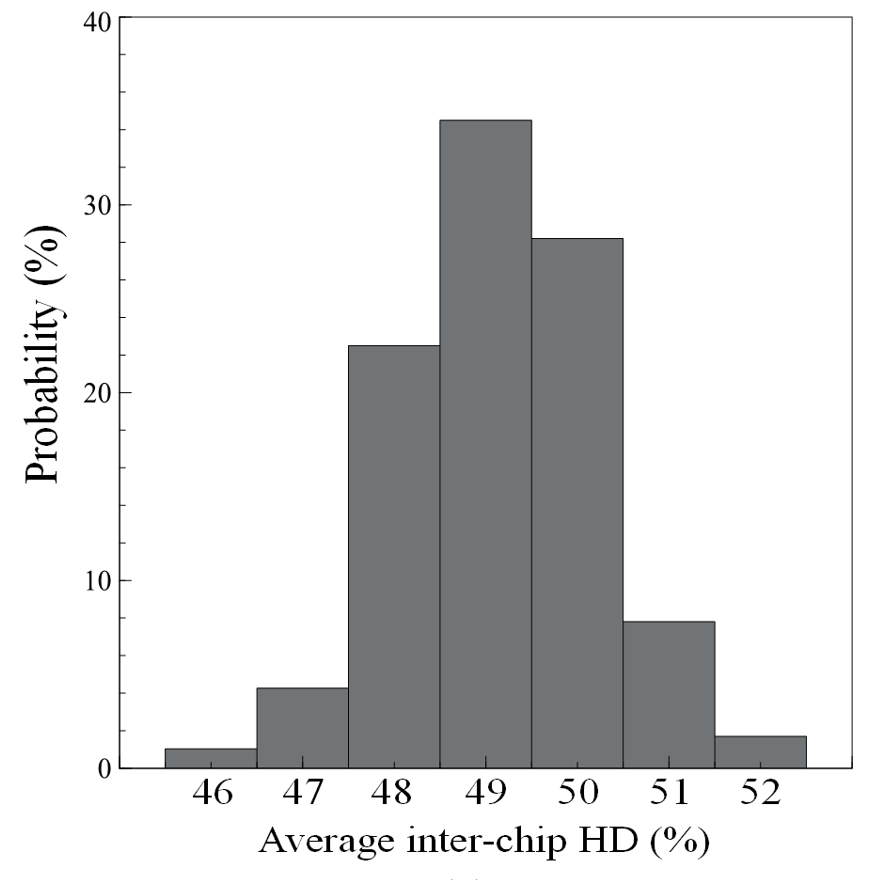}
    \label{fig:chps}}
 	\caption{Distribution of inter-\ac{hd}.}
 	\label{fig: interHD}
\end{figure}

\textbf{EPUF Performance Evaluation.} We implemented the EPUF architecture such that \ac{dram} read operations and responses generation can be run in parallel. The worst evaluation period, taking into account all the steps needed to produce at least $700$-bit accurate response, is experimentally calculated to be $0.78$ms. The comparison of evaluation time of EPUF with the existing latency-based \ac{dram} PUFs is shown in Table~\ref{tab:evTime}. We notice that the results obtained using EPUF are significantly lower than those obtained using other methods. Hence, EPUF minimizes system interference and can act as a fast and real-time security primitive for various applications.
One of the important challenges in existing \ac{dram} PUF mechanisms is the limitations of qualified segments to be used as PUF segments. This is due to the fact that the number of bit failures in the reduced timing values is not distributed uniformly within the whole \ac{dram}. In this way, only segments with efficient bit failures are known as the PUF segments and the \ac{crp} space is significantly limited. This issue can be resolved by using entropy features for response generation instead of directly exploiting \ac{dram} content. Furthermore, due to the parallelism used to read and process \ac{dram} rows, we interface only one rank during the PUF operation. Therefore, this method reduces the hardware and memory space overheads. 
Our proposed mechanism uses low-cost and low-power FPGAs or microcontrollers with floating-point units, which are embedded with the memory controller enabling the possibility of timing parameter access during system run-time.
\begin{table}[!h]
\renewcommand{\arraystretch}{1.3}
\caption{Comparison of EPUF evaluation time with other existing technologies}
\label{tab:evTime}
\centering
\begin{tabular}{p{4cm}||p{3cm}}
\hline
\bfseries \ac{dram} PUF Technology & \bfseries Evaluation Time [ms] \\
\hline\hline
Reduced $t_{RCD}$-based PUF \cite{6} & $88.2$ \\
\hline
Fast \ac{dram} PUF \cite{7} & $3.19$   \\
\hline
PreLat PUF \cite{8} & $1.59$   \\
\hline
Proposed EPUF & $0.93$ \\
\hline
\end{tabular}
\end{table}

\section{EPUF-Based Key Generation}
PUFs exploit the physical circuit properties to produce responses and, like any physical measurement, are eventually influenced by varying operating conditions. Thus, PUF response reproductions are not fully stable. Therefore, most PUF-based key generators use post-processing mechanisms such as fuzzy extractors to transform unstable responses into a secure, reliable, and uniform cryptographic key~\cite{15, 16, 17, 18}. This method requires extra hardware area to implement the helper data algorithms and ECC logic, and also leads to additional computational costs. Therefore, these types of PUF key generators are not appropriate for use on low-power computing platforms such as resource constrained \ac{iot} devices. On the other hand, \ac{dram} PUFs use pre-processing mechanisms to allocate reliable \ac{dram} cells and filter noisy cells to generate PUF responses. To the best of our knowledge, there is no \ac{dram} PUF-based key generator implementation that provides reliable keys for cryptographic protocols without the need for ECC techniques. 
EPUF-based responses can be used to extract secure and random keys with high entropy. Additionally, the high reliability of these responses make them suitable for key agreement applications. In particular, we consider the use of EPUF responses as secure \ac{dram} primitives to derive a secure and reliable cryptographic key, which is operational on low cost and low power microcontrollers.

Fig.~\ref{fig:keyGen} illustrates the EPUF-based key generation, including registration and reconstruction phases. During the registration, the server records multiple responses that are generated from the same challenge under different operating conditions. These responses are applied to our characterization algorithms to generate EF, HS, and finally the reliable response $R$. Then, the secure key is generated using one-way hash function $SK= h(R)$, which is exploited to produce a full bit random cryptographic key and allows cryptographic protocols to be operational and resilient against various attacks. 
\begin{figure}[!h]
    \centering
    \includegraphics[width=\columnwidth]{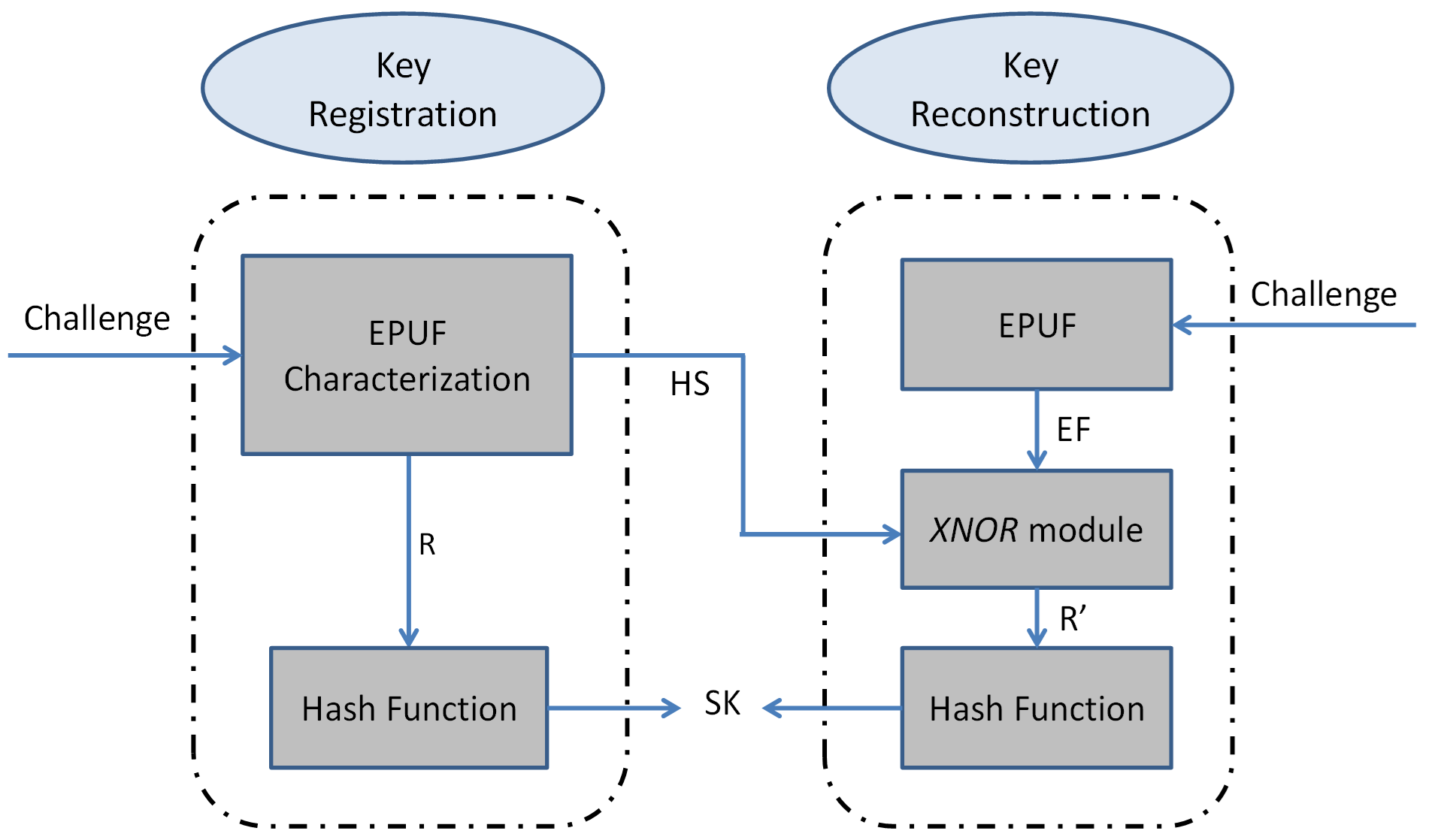}
    \caption{EPUF-based key generation. }
    \label{fig:keyGen}
\end{figure}
To perform key reconstruction, the raw response (Res) is read from \ac{dram} when the PUF is deployed in the field. Next, the read response is given to entropy module to generate the corresponding EF. Then, the final response is determined by XNORing EF and HS $(R=EF \odot HS),$ and the SK is generated after applying the one-way hash function.  

\section{EPUF-Based Authentication for \ac{iot} Networks}
In this section, we present a lightweight EPUF-based authentication protocol for communication between an \ac{iot} device and a server in a typical \ac{iot} network. EPUF allows for robust authentication and identification protocols without needing ECC.

\subsection{System and Threat Model }
We consider an \ac{iot} system consisting of many \ac{iot} devices and one server. An \ac{iot} device is a node in the network that is equipped with the proposed EPUF and securely interacts with its own environment. This node can send reports to the server and receive control messages accordingly. Although each \ac{iot} node is a highly resource-constrained device located in the open environment without considerable hardware protection, EPUF is embedded in the \ac{iot} device and any physical attempt to remove the EPUF from the device leads to the destruction of both device and EPUF. The server controls all \ac{iot} devices in the network and has a large database, high computational power, and sufficient hardware protection.

In this paper, we consider the widely-used Canetti-Krawczyk (CK)~\cite{19} adversary model. In CK (as a stricter model than Dolev–Yao model~\cite{20}), the adversary has full access to the communication link. Thus, the adversary can perform eavesdropping, modifying, impersonation, and replay attacks, or disrupt the network by performing Denial of Service (DoS) attacks. Furthermore, the adversary can obtain the secrets stored in the memory of the deployed \ac{iot} devices. In addition, we consider the outcome of ephemeral secrets leakage of one session proposed in the CK-adversary model. We propose an EPUF-based authentication protocol for \ac{iot} environment that not only resists all the possible attacks, but also provides better security and efficiency features compared to the state-of-the-art.

\subsection{Registration Phase }
Similar to the registration phase in EPUF-based key generation, the $j$-th \ac{iot} device, $Dev_j$, determines the initial parameters by running the EPUF characterization algorithms for multiple challenges. Since we are using most of the banks in the embedded \ac{dram} for the proposed EPUF, a large set of \acp{crp} is generated per device in this phase. Then, $Dev_j$ sends a request to the server and a row in the server's database is generated consisting of the exploited challenges, HSs, and the hash of the corresponding golden responses. We assume that the golden response $R$ is the final reliable response, which is produced via the XNOR of EF and HS. Furthermore, $Dev_j$ randomly generates an identifier, $ID_j^0$, and sends all the parameters to the server. Then $Dev_j$ removes all parameters including the \acp{crp} from its memory, except for $ID_j^0$. We assume that $n$ is the maximum number of generated \acp{crp}, and $\ell$ is the effective number for $0\leq \ell \leq n$. The server stores the $Dev_j$'s parameters including all the challenges ($Ch_{\ell}^j$), all the related keys ($K_{\ell}^j$), the \ac{hs}s ($HS_{\ell}^j$), and the pseudo-ID $ID_j^0$ in its database. This step can be repeated over long periods depending on the policies of the system and the number of generated \acp{crp}. Furthermore, all messages are communicated through a secure channel during this phase. 

\subsection{Authentication Phase }
In this section, we only consider the $i$-th communication between $Dev_j$ and the server, which can be easily extended to all devices at different times. The main basis of authentication in this paper is the mutual validation of the protocol parties based on the comparison of PUF responses to the same challenge input. Fig.~\ref{fig:auth} shows our proposed authentication scheme based on EPUF. We provide more details of the proposed authentication protocol in the following sections.
\begin{figure}[!h]
    \centering
    \includegraphics[width=\columnwidth]{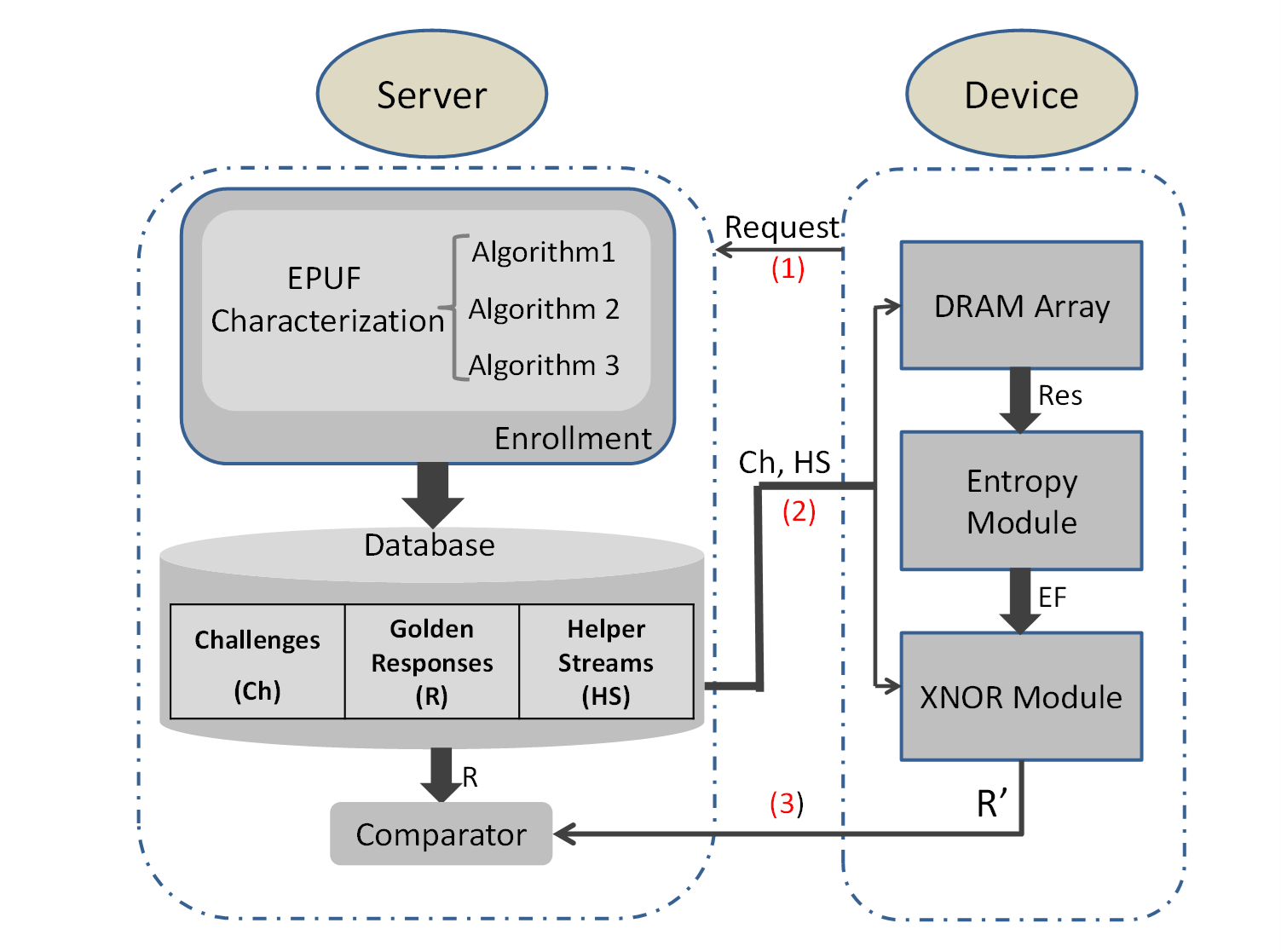}
    \caption{EPUF-based authentication.}
    \label{fig:auth}
\end{figure}

To start the authentication process, $Dev_j$ generates a nonce $N_j^i$  and sends it with $ID_j^i$ to the server. After receiving this message, the server looks for $ID_j^i$ in its database and finds the corresponding challenge, key, and HS, ${Ch_i^j, K_i^j, HS_i^j}$. Next, it generates a random number $rnd_i^j$ and computes $A_i^j=((M_i^j \oplus rnd_i^j ) || rnd_i^j) \oplus K_i^j$, where $M_i^j$ is the control message from server to $Dev_j$, $\oplus$ represents the XOR operation, and $||$ denotes the concatenation operation. The server also computes a hashed value as its verifier $V1_i^j=h(M_i^j, rnd_i^j, N_i^j, HS_i^j,  Ch_i^j, K_i^j )$ and then sends the messages ${A_i^j, V1_i^j, Ch_i^j, HS_i^j}$ to $Dev_j$.   
Upon receiving the message, $Dev_j$ runs its EPUF as $ R_i^j=EPUF(Ch_i^j )\odot HS_i^j$ to obtain the golden response and then computes the key as $K_i^j=h(R_i^j )$. Next, $Dev_j$ decrypts $A_i^j$ to obtain $M_i^j$ and $rnd_i^j$ with only one XOR operation $A_i^j \oplus K_i^j$. $Dev_j$ then checks if $V1_i^j=h(M_i^j, rnd_i^j, N_i^j, HS_i^j, Ch_i^j, K_i^j )$ holds. If $V1_i^j$ passes the verification, the server is authenticated by $Dev_j$, otherwise the message will be discarded. Then, $Dev_j$ generates a pseudo-random number $ID_j^{i+1}$, as its new identifier, and computes $E_i^j=(D_i^j||ID_{i+1}^j ) \oplus h(K_i^j )$, where $D_i^j$ is the $i$-th message of $Dev_j$ to the server. Finally, $Dev_j$ computes $V2_i^j=h(D_i^j, rnd_i^j, ID_{i+1}^j, K_i^j )$ for the verification at the server side, removes all the parameters but $ID_j^{i+1}$ from its memory, and sends $\{E_i^j, V2_i^j \}$ to the server. The server can decrypt $D_j^i$ and $ID_{i+1}^j$ by computing $E_j^i\oplus h(K_j^i )$ from the receiving packet, and then checks if $V2_i^j=h(D_i^j, rnd_i^j, ID_{i+1}^j, K_i^j )$ holds. If $V2_j^i$ passes the verification process, the mutual authentication is established. The server accepts the message and replaces $ID_{i+1}^j$ with the previous one in its database. Otherwise, the message will be discarded. Our proposed EPUF-based authentication protocol is depicted in Fig.~\ref{fig:authProt}.
\begin{figure}
    \centering
    \includegraphics[width=0.7\columnwidth]{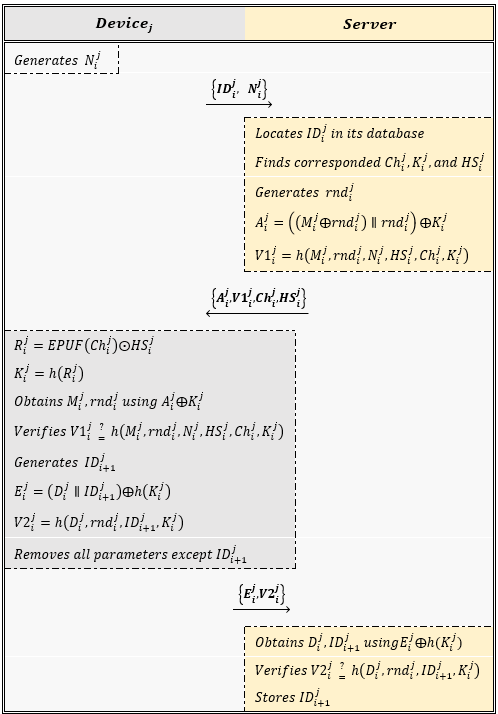}
    \caption{The proposed EPUF-based authentication protocol.}
    \label{fig:authProt}
\end{figure}

\subsection{Formal Security Analysis }
In this section, we prove the security of our protocol in the presence of Probabilistic Polynomial Time (PPT) adversaries. Notice that the authenticity of the scheme only relies on using a secure and collision-resistant one-way hash function. Therefore, we only informally discuss the authenticity of the proposed protocol in the next section. In the following, we prove the confidentiality of our protocol~\cite{21}. We first introduce two lemmas:
\begin{lemma}
$EPUF(\cdot)$ function is a true random number generator and the value of $R_i=EPUF(C_i )$, where $C_i$ is an arbitrary challenge, is indistinguishable from a same length string created by a pseudo random generator (PRNG) for a PPT adversary. 
\end{lemma}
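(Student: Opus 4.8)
The plan is to cast the statement as a standard computational indistinguishability argument and reduce it to two ingredients: the physical unpredictability of the EPUF entropy source established in Section~\ref{sec:entropy fearures}, and the defining property of a secure PRNG. First I would fix the game: let $\mathcal{D}$ be any PPT distinguisher that receives an $m$-bit string and must decide whether it was drawn from the EPUF response distribution $\mathcal{R} = \{EPUF(C_i)\}$ or from the output of a secure PRNG $G$ on a uniformly random seed. The goal is to bound the distinguishing advantage $\mathrm{Adv}_{\mathcal{D}} = |\Pr[\mathcal{D}(EPUF(C_i)) = 1] - \Pr[\mathcal{D}(G(U)) = 1]|$ by a negligible function of the security parameter.

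The core of the argument would pass through the uniform distribution $U_m$ as an intermediate hybrid. By definition, a secure PRNG satisfies $|\Pr[\mathcal{D}(G(U))=1] - \Pr[\mathcal{D}(U_m)=1]| \le \mathsf{negl}$ for every PPT $\mathcal{D}$. It therefore suffices to show that the EPUF output is itself statistically close to $U_m$, i.e.\ that the response bits are (near-)unbiased and mutually (near-)independent. Here I would invoke the physical model of Section~II: the responses are entropy features of DRAM bit-failure patterns arising from uncontrollable manufacturing process variations, so no efficient adversary lacking physical access to the specific device can predict them better than guessing. I would back this assumption with the empirical evidence already reported, namely the $47.79\%$ inter-chip and $47.09\%$ inter-bank Hamming distances (close to the ideal $50\%$) together with the diffuseness results of Table~\ref{tab:avg_HDs}, which indicate that $\mathcal{R}$ has high min-entropy and is nearly uniform. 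Combining the two bounds by the triangle inequality then yields $\mathrm{Adv}_{\mathcal{D}} \le \mathsf{negl}$, which is the claim.

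The main obstacle I anticipate is precisely this middle step: rigorously establishing that $\mathcal{R}$ is statistically close to $U_m$. This is ultimately a physical rather than a purely mathematical statement, so a fully formal treatment cannot derive it from axioms; instead it must be reduced to an explicit \emph{entropy assumption} on the DRAM source (e.g.\ a lower bound on the per-response min-entropy) and validated empirically through the measured uniqueness and uniformity metrics and, ideally, a standard statistical battery such as the NIST suite. I would therefore state the uniformity and unpredictability of the raw entropy-feature source as the single physical assumption underlying the lemma and carry the remainder through the hybrid argument above. The delicate point to handle carefully is ensuring that the filtering of unreliable features via the helper stream (\ref{alg:helpStream}) does not leak enough structure to bias the surviving bits, since the helper stream is transmitted in the clear; I would argue, as claimed in Section~\ref{sec:implem}, that it encodes only the \emph{locations} of stable features and thus reveals no information about the response values themselves.
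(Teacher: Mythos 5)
Your proposal is consistent with the paper's intent but is far more developed than what the paper actually provides: the paper's entire proof of this lemma is the single sentence ``The proof comes from the randomness property of the proposed EPUF.'' In other words, the authors simply assert the conclusion by appeal to the empirically measured randomness of the responses, whereas you supply the scaffolding they omit --- a hybrid argument through the uniform distribution $U_m$, the triangle inequality combining the PRNG security bound with a closeness-to-uniform bound on the EPUF output, and, crucially, an explicit identification of the step that cannot be proved mathematically and must instead be postulated as a physical min-entropy assumption on the DRAM source (validated by the reported $47.79\%$ inter-chip and $47.09\%$ inter-bank Hamming distances). Your version buys a clean separation between the cryptographic reduction (which is rigorous) and the physical assumption (which is empirical), and it additionally flags a point the paper never addresses in this lemma: that the helper stream is transmitted in the clear, so one must argue it reveals only the \emph{locations} of stable features and cannot bias or leak the surviving response bits. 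The paper's version buys brevity at the cost of leaving both the reduction and the leakage question implicit. Neither treatment is a formal proof in the mathematical sense --- nor can one be, for a statement about a physical source --- but yours makes the underlying assumption explicit and auditable, which is the honest way to state such a lemma.
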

\begin{proof}
The proof comes from the randomness property of the proposed EPUF. 
\end{proof}
\begin{lemma}
The one-way collision-resistant hash function $h(\cdot)$ with random input and a uniform distribution is indistinguishable from a same-length string created by a PRNG for a PPT adversary.  
\end{lemma}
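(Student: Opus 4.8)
The plan is to establish the claim by a standard hybrid (game-hopping) argument that bridges the hash output and the PRNG output through an intermediate truly-uniform distribution. Let $H_0$ denote the distribution of $h(x)$ when the input $x$ is drawn uniformly at random, let $H_1$ denote the uniform distribution over strings of the same output length $n$, and let $H_2$ denote the distribution $G(s)$ produced by the PRNG on a uniformly random seed $s$. The goal reduces to showing that $H_0$ and $H_2$ are computationally indistinguishable, which I would obtain by proving $H_0 \approx_c H_1$ and $H_1 \approx_c H_2$ separately and then combining the two advantages via the triangle inequality on distinguishing advantage.

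First I would dispatch the step $H_1 \approx_c H_2$, which is immediate: by the very definition of a cryptographically secure PRNG, no PPT distinguisher can tell $G(s)$ (for random seed $s$) from a uniformly random string of the same length with more than negligible advantage, so any PPT adversary $\mathcal{A}$ satisfies $\mathrm{Adv}^{H_1,H_2}_{\mathcal{A}}(\lambda) \le \mathrm{negl}(\lambda)$. This step requires no structural property of $h$ and follows purely from the PRNG assumption.

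Next I would treat the step $H_0 \approx_c H_1$, which is the crux of the argument. Here I would model $h(\cdot)$ in the random-oracle model, consistent with the stated hypothesis that $h$ is one-way and collision-resistant on uniformly distributed input: for a freshly sampled random $x$, the value $h(x)$ is an unqueried point of the oracle and is therefore distributed exactly uniformly, so $H_0$ and $H_1$ coincide \emph{except} when the adversary has itself queried the preimage $x$. Such a bad event would contradict one-wayness — a distinguisher that biases its verdict on having guessed $x$ can be converted, by a straightforward reduction, into an inverter recovering $x$ from $h(x)$ with non-negligible probability. I would therefore bound $\mathrm{Adv}^{H_0,H_1}_{\mathcal{A}}(\lambda)$ by the one-wayness advantage plus a birthday-type collision term of order $q_h^2/2^{n}$ (with $q_h$ the number of hash queries), both of which are negligible in $\lambda$.

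Combining the two steps gives $\mathrm{Adv}^{H_0,H_2}_{\mathcal{A}}(\lambda) \le \mathrm{Adv}^{H_0,H_1}_{\mathcal{A}}(\lambda) + \mathrm{Adv}^{H_1,H_2}_{\mathcal{A}}(\lambda) = \mathrm{negl}(\lambda)$, which establishes the lemma. The main obstacle I anticipate is precisely the $H_0 \approx_c H_1$ step: collision-resistance and one-wayness \emph{per se} do not imply that $h$ has pseudorandom output, so the argument genuinely leans on the random-oracle idealization (or, alternatively, on explicitly assuming $h$ is a secure pseudorandom function keyed by the random input). Making this dependence precise — and arguing that the preimage-guessing bad event is negligible through the one-wayness reduction — is where the real care is required; the rest is bookkeeping over distinguishing advantages.
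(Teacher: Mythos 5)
Your proposal is correct in substance but takes a genuinely different route from the paper. The paper's proof works directly with the distinguishing game: it sets $pr[\mathbb{D}^{z=q}=1]=1/2$ by fiat (treating the PRNG output as perfectly unguessable), writes the hash-side success probability as $1/2+\epsilon(n)$, and then \emph{identifies} $\epsilon(n)$ with the probability of finding a preimage $p'$ with $h(p')=h(p)$, which it bounds by splitting into the collision case ($p'\neq p$, negligible by collision resistance) and the lucky-guess case ($p'=p$, probability $2^{-l(n)}$). Your hybrid decomposition $H_0\approx_c H_1\approx_c H_2$ is cleaner bookkeeping, and your handling of the PRNG leg is more careful than the paper's (you keep a negligible advantage term rather than asserting exactly $1/2$). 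More importantly, you put your finger on the real issue that the paper's argument glosses over: the step equating the distinguisher's advantage with the probability of recovering a preimage is a non sequitur, since a hash function can be one-way and collision-resistant while having visibly non-uniform output (e.g., a fixed leading bit), in which case $\mathbb{D}$ distinguishes without ever inverting anything. Your explicit appeal to the random-oracle idealization (or a PRF/PRG-type assumption on $h$) is exactly what is needed to close that gap; the paper implicitly relies on the same idealization without saying so. In short, your route buys a sound argument at the cost of an openly stated stronger assumption, whereas the paper's route stays within the stated hypotheses but its central step does not follow from them.
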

\begin{proof}
In this game-based proof, two random numbers $p$ and $q$ are generated. Next, a coin is flipped and based on which side it lands, the parameter $z$, which can be either $h(p)$ or $q$, will be given to a distinguisher adversary $\mathbb{D}$. $\mathbb{D}$’s purpose is to find out whether $z$ is a hashed value created by $h(\cdot)$ or is a random string generated by a PRNG. Thus, if the probability of $\mathbb{D}$’s success is negligibly more than a pure guess $(1/2+\epsilon(n))$. Thus, we have to prove that the following equation holds: 
\begin{equation}
    |pr[\mathbb{D}^{z=h(p) }=1]-pr[\mathbb{D}^{z=q}=1]| \leq negl (n),
\end{equation}
where $negl(n)$ is a negligible value.

By definition, when the random string is generated by a PRNG $g$; $p \gets g(seed)$, the probability of success for $\mathbb{D}$ is equal to a pure guess i.e., $pr[\mathbb{D}^{z=q}=1]=1/2$ . If $z$ is a hashed value of $h(p)$, where $p$ is randomly and uniformly selected, the probability of $\mathbb{D}$’s success is $pr[\mathbb{D}^{z=h(p) }=1]=1/2+\epsilon(n)$. Notice that $1/2$ comes from the probability of a pure guess and $\epsilon(n)$ shows all the additional information that $\mathbb{D}$ can obtain from $h(p)$ with a randomly-selected $p$. Thus, we have:
\begin{equation}
\begin{split}
    \epsilon(n) & = pr[find p' | h(p' )=z]= pr[find p' |h(p' )=h(p)]= \\ 
    & pr[find p' |h(p' )=h(p),p'\neq p] pr[p'\neq p]+ \\
    & pr[find p' |h(p' )=h(p),p'=p] pr[p'=p].
\end{split}
\end{equation}
As $p$ and $p'$ are randomly generated, $pr[p'=p]=1/2^{l(n)} =negl (n)$, where $l(n)$ is the length of the random string. Furthermore, the probability $pr[find p' |h(p' )=h(p),p'\neq p]$ is equal to find a collision in $h(\cdot)$, which is considered to be negligible. Therefore:
\begin{equation}
 \begin{split}
     & \epsilon(n) = 
      negl(n)×(1-1/2^{l(n)} )+ \\
      & pr[find p' |h(p' )=h(p),p'=p] \cdot negl (n) \leq negl (n).
 \end{split}   
\end{equation}
So we have $pr[D^{z=h(p) }=1]\leq 1/2+negl (n)$. As a result, the following equation is obtained as a proof for lemma 2. 
\begin{equation}
\begin{split}
    & |pr[D^{z=h(p) }=1]-pr[D^{z=q}=1]|=\\
    & |1/2+negl (n)-1/2| \leq negl (n).
\end{split}    
\end{equation}
\end{proof}
Thus, if a PPT distinguisher $\mathbb{D}$ is able to distinguish between the output of a one-way collision-resistant hash function and a random string, then another adversary can find a collision in the mentioned one-way hash function.
\begin{theorem}
The proposed protocol has indistinguishable encryption security in the presence of PPT adversaries. 
\end{theorem}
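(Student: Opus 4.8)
The plan is to recast the statement as the standard indistinguishability (IND) game for an encryption scheme and then reduce its security to the two lemmas already established. In the game, a PPT adversary $\mathbb{A}$ submits two equal-length control messages $M_0,M_1$; the challenger samples a bit $b$ and returns the challenge ciphertext together with the rest of the transcript, i.e. $A=\left(\left(M_b\oplus rnd\right)\|\,rnd\right)\oplus K$ with $K=h(R)$ and $R=EPUF(Ch)\odot HS$, as well as the verifiers $V1$, $V2$, and the masked message $E$. The goal is to show that $\left|\Pr[b'=b]-\tfrac12\right|\leq negl(n)$, where $b'$ is $\mathbb{A}$'s output. First I would fix a single authentication session, since each session uses a freshly generated challenge-response pair and hence a fresh key $K$ that is independent across sessions.

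Next I would carry out a short sequence of game hops. In $G_0$ the key is the real $K=h\!\left(EPUF(Ch)\odot HS\right)$. In $G_1$ I replace the raw response $EPUF(Ch)\odot HS$ by a uniformly random string; because $HS$ is public and XOR-ing with it is a bijection, Lemma~1 guarantees that $G_0$ and $G_1$ are computationally indistinguishable. In $G_2$ I replace the hash $K=h(\cdot)$ of that now-uniform input by a uniformly random key $\widetilde{K}$; since the input is uniform, Lemma~2 gives $\left|\Pr[\mathbb{A}^{G_1}=1]-\Pr[\mathbb{A}^{G_2}=1]\right|\leq negl(n)$. In $G_2$ the pad $\widetilde{K}$ is uniform and independent of $b$, so $A$ is a genuine one-time-pad encryption of $M_b$ and is distributed identically for $b=0$ and $b=1$. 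Hence $\mathbb{A}$'s advantage in $G_2$ is exactly $0$, and chaining the negligible gaps across the hops yields the claim.

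The hard part will be handling the fact that $K$ is reused outside the one-time pad: it also appears inside the hashed verifiers $V1=h(M,rnd,N,HS,Ch,K)$ and $V2=h(D,rnd,ID,K)$, and in the mask $h(K)$ used to form $E$. I must argue that these auxiliary outputs leak no usable information about the pad, otherwise the one-time-pad argument in $G_2$ does not transfer back to $G_0$. I would dispatch this with additional hops that, invoking Lemma~2 once more on each hashed quantity (or modeling $h$ as a random oracle), replace $V1$, $V2$, and $h(K)$ by independent uniformly random strings; after these substitutions the entire transcript is independent of $b$. Establishing this independence cleanly across the reused key material — rather than the one-time-pad step itself — is where the real work lies.
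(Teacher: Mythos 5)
Your proposal is correct and reaches the same conclusion, but it is organized differently from the paper's argument. The paper gives a direct two-adversary reduction: an eavesdropper $A$ attacking the ciphertext $A_i^j$ is wrapped by a distinguisher $\mathbb{D}$ that embeds its challenge string $z$ (either $h(EPUF(c_i))$ or a uniform $q$) as the pad; when $z=q$ the scheme is a perfect one-time pad so $A$'s advantage is exactly $1/2$, and any non-negligible advantage when $z=h(R_i)$ transfers to $\mathbb{D}$, contradicting Lemma~2. Your game-hopping presentation ($G_0\to G_1\to G_2$) is the hybrid-style rendering of the same reduction, with the minor difference that you invoke Lemma~1 explicitly in the first hop, whereas the paper uses it only implicitly (Lemma~2 requires a uniformly random input, which is exactly what Lemma~1 supplies). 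The substantive point where you go beyond the paper is the key-reuse issue: the paper proves indistinguishability for $A_i^j$ in isolation and asserts that ``the same proof can be used for $E_i^j$,'' silently ignoring that $K_i^j$ also appears inside $V1_i^j$, $V2_i^j$, and the mask $h(K_i^j)$, so the adversary's view is a joint distribution, not a single ciphertext. Your extra hops addressing this are necessary for a complete argument; note, however, that Lemma~2 as stated only covers hashes of uniformly random inputs, so replacing $V1_i^j=h(M_i^j,rnd_i^j,N_i^j,HS_i^j,Ch_i^j,K_i^j)$ (whose input is only partially secret) by a uniform string really does require the random-oracle modeling you mention, or a PRF-style assumption on $h$ keyed by $K_i^j$ --- a stronger hypothesis than either lemma provides. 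Making that assumption explicit is the one thing your write-up should still do; with it, your proof is both valid and strictly more complete than the paper's.
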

\begin{proof}
We present the proof for $A_i^j$ in the following, and the same proof can be used for $E_i^j$. First, by naming the proposed protocol $\pi$ we have:
\begin{equation}
    \pi = 
    \begin{cases}
        Gen & k \gets \{0,1\}^n, \\ 
        Enc & C = m \oplus G(k), \\
        Dec & m = C \oplus G(k);
    \end{cases}
\end{equation}
where $k$ is considered as a challenge, $G(k)=h(EPUF(k))$ is the cryptographic key, $C=E_i^j$ is the encrypted message, and $m=(M_i^j \oplus rnd_i^j )|| rnd_i^j$ is considered as the plaintext. For the proof, we introduce a new game as shown in Fig.~\ref{fig:adv}. The game is based on two PPT adversaries: an eavesdropper $A$ who attacks the proposed protocol, and a distinguisher $\mathbb{D}$ who uses the information of $A$ to distinguish between two random strings.
\begin{figure}
    \centering
    \includegraphics[width=0.7\columnwidth]{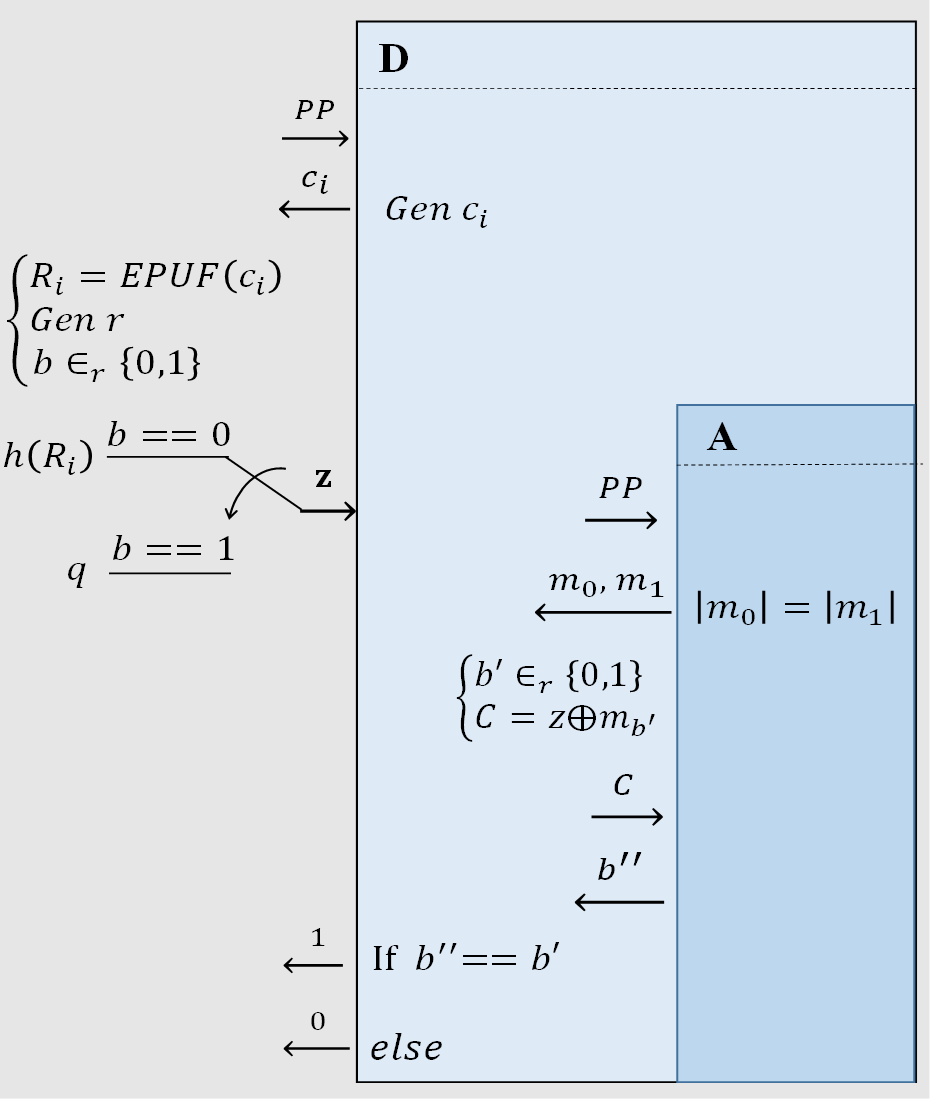}
    \caption{The adversary model in the provided game.}
    \label{fig:adv}
\end{figure}
After receiving the public parameters $\{l(n),h(\cdot)\}$ by both $A$ and $\mathbb{D}$, $A$ selects two arbitrary messages $m_0,m_1$ with the same length and sends them to $\mathbb{D}$. Then, $\mathbb{D}$ sends a challenge $c_i$ to the Oracle. The Oracle flips a coin and based on the outcome, sends $h(EPUF(c_i ))$ or a random string $q$ generated by a PRNG. $\mathbb{D}$ also flips a coin and encrypts either $m_0$ or $m_1$ for sending one of them to $A$. $A$ succeeds if it is able to find $b'$ with probability higher than a pure guess $(1/2)$. 

We prove that if an adversary like $A$ is able to distinguish $C$ and find $b'$ with non-negligible probability $\epsilon$, then an adversary like $\mathbb{D}$ is able to distinguish between a hashed value with a random input and a random string created by PRNG. However, this is a violation of lemma 2. First, consider the case  $b==1$ and hence, $z=q$. $\mathbb{D}$ choses $b'\in_r {0,1}$, encrypts the message $m_{b' }$ by computing $C=q\oplus m_{b'}$, and sends it to $A$. $A$'s purpose is to determine $b'$ to find out which of the $m_i$ is encrypted. Thus, the probability of successfully guessing which of the messages is encrypted is equal to pure guess for the PPT adversary $A$, i.e.,
\begin{equation}
    pr[Privk_{A,\pi}^{eav} (n,q)=1]=\frac{1}{2}.
\end{equation}

Assume that $b==0$ thus, $z=h(R_i )$ and the encrypted message is computed as $C=h(R_i)\oplus m_{b' }$. We will see what happens if $A$ can succeed and finds $b'$ with probability $1/2+\epsilon(n)$, where $\epsilon(n)$ is not negligible. Hence, we consider:
\begin{equation}
    pr[Privk_{A,\pi}^{eav} (n,h(R_i ))=1]=1/2+\epsilon(n).
\end{equation}
Notice that $A$ directly attacks the protocol and the probability of success means the security failure of our scheme. We calculate the probability of $\mathbb{D}$’s success which is the ability of distinguishing between $h(R_i)$ and $q$. Thus:
\begin{equation}
    \begin{split}
        & |pr[D^{h(R_i )} =1]-pr[D^q=1]|= \\
        & |pr[Privk_{A,\pi}^{eav} (n,h(R_i ))=1]-pr[Privk_{A,\pi}^{eav} (n,q)=1]| \\
        &\leq |1/2+\epsilon(n)-1/2|=\epsilon(n),
    \end{split}
\end{equation}
which violates Lemma 2. As a result, we have proved that if an adversary/eavesdropper like $A$ is able to attack the proposed protocol, then an adversary/distinguisher like $\mathbb{D}$ can distinguish between a random string and a hashed value with an unknown and random input. Therefore, according to Lemma 2, if such $\mathbb{D}$ exists then, another adversary exists who is able to find a collision in the secure one-way hash function. The indistinguishable encryption security of the proposed protocol has been proved on the device’s side. The server’s side security can be brought into the same analysis as well.
\end{proof}

\subsection{Informal Security Analysis }
In this section, we discuss the security of our protocol against the adversarial attacks in the CK-adversary model.

\textbf{Message Analysis Attack.} The adversary tries to undermine the confidentiality of messages $A_i^j$ and $E_i^j$ to obtain the pseudo-random numbers or the plaintexts. However, $A_i^j$ and $E_i^j$ are respectively encrypted with the secure EPUF-based keys i.e., $K_i^j$ and $h(K_i^j )$, which are one-time pad keys and securely derived from the server's database. Thus, the adversary is not able to decrypt any secret from the transmitted messages by eavesdropping the communication link. As a result, this kind of attack is computationally infeasible for a PPT adversary and our scheme provides a good level of confidentiality. 

\textbf{Message Altering and Impersonation Attack.} The adversary tries to modify the communicated messages or impersonate as one of the legal parties of the protocol. Since our protocol supports two-way communication, an adversary may impersonate both \ac{iot} devices and server. In both cases, the adversary aims to modify or inject the messages and then create his/her responding verifier in a way that it passes the one-way hash-based verification process at the other end. However, because of the one-way and collision-resistant properties of the cryptographic hash functions used in constructing $V1_i^j=h(M_i^j,rnd_i^j,N_i^j,HS_i^j,Ch_i^j,K_i^j )$ and $V2_i^j=h(D_i^j,rnd_i^j, ID_{i+1}^j,K_i^j )$ and also lack of sufficient information about $rnd_i^j$ and $K_i^j$, performing these kind of attacks would be computationally infeasible for a PPT adversary.

\textbf{Replay Attack.} The adversary tries to use the out-of-date packets, which were previously used in former authentication phases, and replay them to each of the parties in the network. However, both \ac{iot} device and server generate a new nonce and random number for each authentication phase, and use them in their verifies as $V1_i^j=h(M_i^j, rnd_i^j, N_i^j, HS_i^j, Ch_i^j, K_i^j)$ and $V2_i^j=h(D_i^j, rnd_i^j, ID_{i+1}^j, K_i^j)$, respectively. Hence, the verification processes will fail when the adversary replays the out-of-date packets. In addition, the session keys are securely created based on EPUF which are updating per each protocol runtime. 

\textbf{DoS Attack.} The adversary's purpose is to exhaust all network resources such that legitimate parties are unable to process useful information. Since the server has high computational power, we consider DoS attack only on device side. This attack is done on the device side when the server sends the packet ${A_i^j, V1_i^j, Ch_i^j, HS_i^j }$ to $Dev_j$. For that, the adversary sends bogus messages such as ${W, X, Y, Z}$ in form of ${A_i^j, V1_i^j, Ch_i^j, HS_i^j }$ to $Dev_j$. Consequently, all $Dev_j$ needs to do for discarding this message is one EPUF evaluation and XNOR operation to retrieve $R_i^j$, one one-way hash function to generate $K_i^j$, one XOR operation to decrypt $A_i^j$, and another one-way hash function for verifying $V1_i^j$. Therefore, this device can reject the received message without consuming considerable computational resources. Note that, the probability that the random tuple ${W,X,Y,Z}$ successfully passes the verification process is negligible.

\textbf{Privacy and Anonimity.} $Dev_j$ generates a new ID $(ID_{i+1}^j)$ for each (next) authentication phase and securely sends it to the server. Furthermore, $Dev_j$ uses its pseudo-ID only once and no entity except $Dev_j$ and the server knows about $ID_{i+1}^j$; only the server knows about the devices' activities. As a result, the proposed scheme provides privacy and anonymity for the \ac{iot} devices, not only against the external adversary but also against the other trusted entities in the network. Furthermore, even the server does not know about the devices' IDs, unless the devices send their IDs to the server during the authentication process.

\textbf{Perfect Forward/Backward Secrecy.} We investigate whether revealing of one session key $(K_i^j)$ may lead to the inference of previous keys. According to Fig.~\ref{fig:authProt}, the $i$-th and $i+1$-th session keys are computed as $K_i^j=h(R_i^j)$ and $K_{i+1}^j=h(R_{i+1}^j)$, respectively. $R_i^j$ and $R_{i+1}^j$ are two different responses derived from our proposed EPUF, thus, the session keys $K_i^j$ and $K_{i+1}^j$ are completely independent. Thus, the adversary will not obtain any knowledge about the other previous session keys by having one. The same analysis implies when the adversary tries to obtain the next keys by having one session key. In fact, since all the keys are stored in the server's database through a secure channel in the registration phase, the adversary has no chance to obtain the previous/next session keys using the compromised key. Therefore, the proposed protocol provides perfect forward/ backward secrecy. 
\begin{table*}[!h]
\renewcommand{\arraystretch}{1.3}
\caption{Security and performance comparisons between the proposed scheme and the state-of-the-art (\cmark= has the feature, \xmark= does not have the feature, n.a. $=$ not applicable).}
\label{tab:eval}
\centering
\begin{tabular}{|p{1cm}|p{2cm}|p{3cm}|p{1.5cm}|p{0.4cm}|p{0.4cm}|p{0.4cm}|p{0.4cm}|p{0.4cm}|p{0.4cm}|p{0.4cm}|p{0.4cm}|p{0.4cm}|p{0.4cm}|}
\hline
\bfseries Scheme & \bfseries Communication Overhead [B] & \bfseries Used Primitives & \bfseries Computation Cost [ms] & \bfseries $F_1$ & \bfseries $F_2$ & \bfseries $F_3$ & \bfseries $F_4$ & \bfseries $F_5$ & \bfseries $F_6$ & \bfseries $F_7$ & \bfseries $F_8$ & \bfseries $F_9$ & \bfseries$F_{10}$\\
\hline\hline
\cite{24} & $336$ & $T_{\rm PUF} + T_{\rm REC} + 14 T_{\rm H} + 2T_{\rm RN} + 2 T_{\rm PM} + 2T_{\rm PA} + T_{\rm ME}$ & $25.314$ & n.a. & \cmark & \xmark & \cmark & \xmark  & \cmark & \xmark & \cmark & \xmark  & \xmark \\
\hline
\cite{25} & $260$ & $T_{\rm PUF} + T_{\rm REC} + 6 T_{\rm H} + 7 T_{\rm PM} + 2T_{\rm PA}$ & $46.656$ & n.a. & \cmark & \xmark & \cmark & \xmark  & \cmark & \xmark & \cmark & \xmark  & \xmark \\
\hline
\cite{26} & $176$ & $2T_{\rm PUF} + 3 T_{\rm HMAC} + T_{\rm RN} + 4 T_{\rm SE} + 5T_{\rm SD}$ & $2.881$ & \cmark & \cmark & \cmark & \cmark & \xmark  & \xmark & \xmark  & \xmark & n.a. & \cmark \\
\hline
\cite{27} & $240$ & $2T_{\rm PUF} + T_{\rm GEN} + 7 T_{\rm HN} + T_{\rm RN}$ & $3.501$ & n.a. & \cmark & \cmark & \cmark & \cmark  & \xmark & \xmark  & \cmark & \xmark & \cmark \\
\hline
\cite{28} & $224$ & $2T_{\rm PUF} + 3 T_{\rm HMAC} + 2 T_{\rm H} + 2 T_{\rm RN} + 3 T_{\rm SE} + 2 T_{\rm SD}$ & $2.112$  & \cmark & \cmark & \cmark & \cmark  & \xmark & \xmark  & \xmark & \xmark & n.a. & \cmark \\
\hline
Ours & $192$ & $T_{\rm EPUF} + 4 T_{\rm H} + 2 T_{\rm RN} $ & $1.134$  & \cmark & \cmark & \cmark & \cmark  & \cmark & \cmark  & \cmark & \cmark & \cmark & \cmark \\
\hline
\end{tabular}
\end{table*}

\textbf{Ephemeral Secret Leakage Attack.} The short term secrets of an arbitrary session may be revealed in the CK-adversary model. Hence, the adversary’s purpose is to obtain as much information as possible to attack the protocol in other sessions. In our protocol, the short term secret $rnd_i^j$ is randomly generated by the server at each session. If the adversary knows $rnd_i^j$, he/she can only obtain the right half of the session key $K_i^j$ from $A_i^j$. However, as the creation of the next key is completely independent from the previous keys, the other session keys will not be revealed to the adversary. Another short term secret in this protocol is $Dev_j$'s ID, $ID_{i+1}^j$, whose disclosure at every session leads to the divulgation of the right half of $h(K_i^j)$, which will not reveal any additional information. Since our protocol supports perfect forward/backward secrecy, the disclosure of the session keys will not reveal the other ones. 

\textbf{Physical Attack.} We assume that only the device's \ac{nvm} is vulnerable to an adversary. Hence, when the adversary physically captures an \ac{iot} device, he/she is able to read the secrets (e.g. the session key from the device's \ac{nvm}). Notice that the volatile memory used to store the intermediary secret values during the authentication phase is wiped out after the end of each communication and is not accessible by an adversary~\cite{22,23}. As seen in our protocol, the session keys are generated via EPUF in each authentication process and there is no need for devices to store the secret keys. Therefore, when an \ac{iot} device is physically captured by an adversary, he/she doesn't obtain any secret related to the corresponding session. The only information that the adversary can obtain by the physical attack is the identity of the device which is stored in the device's memory at the end of each authentication phase. However, knowing the device's ID will be trivially observable in the next session. 

\subsection{Performance Evaluation}
In this section, we compare our proposed protocol with the other PUF-based schemes in the literature \cite{24}-\cite{28} in terms of communication and computational overhead and security features guarantees.
We compute the communication overhead of our protocol by measuring the maximum size of messages transmitted from an \ac{iot} device to the server and vice versa. Thus, the total communication cost of our scheme is $|A_i^j,V1_i^j,Ch_i^j,HS_i^j |+|E_i^j,V2_i^j |= (4 \times 256)$ bit $+ (2 \times 256)$ bit $= 192$ Byte.
We assume that the \ac{iot} devices in the network are resource-constrained, whereas the server has a large database and high computational power. Hence, we consider only the computational overhead on the device's side. Nevertheless, our scheme leaves low computational overhead at the server side. As shown in section IV, EPUF needs $0.93$ ms to be executed on the mentioned hardware (with a maximum frequency of $100$ MHz) and generates errorless responses considering zero value as the threshold in \ref{alg:helpStream}. To simulate other cryptographic primitives on the device's side, we refer to \cite{27}, where a single core $798$ MHz CPU with 256 MB of RAM together with JCE library \cite{29} have been used. For calculating the execution time of a typical PUF operation and fuzzy extractor, we adopted a $128$ bit arbiter PUF and the code offset mechanisms using Bose, Chaudhuri, and Hocquenghem (BCH) \cite{30}. In our scheme, each \ac{iot} device needs four one-way hash functions, two PRNGs, and one EPUF operation for one authentication process, for a total of $1.134$ ms. 

Tab~\ref{tab:primitives} summarizes the primitives used.
\begin{table}[!h]
\renewcommand{\arraystretch}{1.3}
\caption{Primitive Functions Description and Computation Cost.}
\label{tab:primitives}
\centering
\begin{tabular}{|p{1cm}||p{5cm}|p{1.5cm}|}
\hline
\bfseries Primitive & \bfseries Description & \bfseries Computation Cost [ms] \\
\hline\hline
$T_{\rm EPUF}$ & EPUF operation & $0.93$ \\
\hline
$T_{\rm PUF}$ & 128-bit Arbiter PUF operation & $0.12$ \\
\hline
$ T_{\rm H}$ & SHA-3 function & $0.026$ \\
\hline
$T_{\rm RN}$ & 128-bit random number generation & $0.05$ \\
\hline
$T_{\rm HMAC}$ & HMAC time~\cite{26, 28} & $0.166$ \\
\hline
$T_{\rm SE} $ & AES-128 encryption & $0.158$ \\
\hline
$T_{\rm SD} $ & AES-128 decryption  & $0.254$\\
\hline
$T_{\rm PM}$ & point multiplication over elliptic curve & $5.9$ \\
\hline
$T_{\rm PA}$ & point addition over elliptic curve & $0.84$ \\
\hline
$T_{\rm ME}$ & modular exponential operation & $7.86$ \\
\hline
$T_{\rm REC}$ & fuzzy extractor key reconstruction & $3.28$ \\
\hline
$T_{\rm GEN}$ & fuzzy extractor key generation & $2.68$ \\
\hline
\end{tabular}
\end{table}
To compare our scheme with already available literature solutions, we consider the following criteria. $F_1$: data confidentiality and integrity; $F_2$: replay attack resistance; $F_3$: DoS attack resistance;$F_4$: impersonation attack resistance; $F_5$: privacy and anonymity; $F_6$: forward secrecy; $F_7$: backward secrecy; $F_8$: considering PUF reliability; $F_9$: proper error correction scheme for PUF; $F_{10}$: lightweight design. Tab.~\ref{tab:eval} shows the comparison results. Our protocol has the least computational overhead. Notice that the presented schemes in~\cite{26} and~\cite{28} did not considered the reliability problem of the PUF by assuming it ideal without any erroneous response. While, by considering a simple BCH-based fuzzy extractor as the most known error correction scheme for PUFs, their computational costs will increase to $6.16$ ms and $5.392$ ms, respectively. Hence, our EPUF-based protocol  is more than three times faster than the best scheme in the literature. One may argue that the registration phase in this paper puts lots of weight to the \ac{iot} devices. However, since our proposed EPUF can produce a very large set of \acp{crp}, the registration process is needed to be redone in very long time periods. Hence, the weight of this phase can be discarded. On the other hand, by registering the \ac{iot} devices in this way, not only we provided an ultra-lightweight authentication protocol but also very important security features are achieved as shown in Tab.~\ref{tab:eval}.

\section{Conclusion}
In this paper, we proposed EPUF, a novel approach to generate robust responses using the entropy features of latency-based \ac{dram} values. We implemented our proposed EPUF on a platform consisting of a Xilinx Spartan6 FPGA (XC6SLX45) and DDR3 chips. The implementation results showed that the proposed EPUF provides a large set of \acp{crp}, which can be used in key generation and authentication purposes without needing any ECC. In addition, we proposed a lightweight authentication protocol based on EPUF, which not only stands secure against CK-adversary but also outperforms the state-of-the-art in term of computational cost. 

%\bibliographystyle{IEEEtran}
%\bibliography{bare_jrnl}

%\vspace{-0.5cm}
\bibliographystyle{IEEEtran}
%\bibliography{sample.bib}

% Generated by IEEEtran.bst, version: 1.14 (2015/08/26)
% Generated by IEEEtran.bst, version: 1.13 (2008/09/30)

\end{document}